 \journalname{Quantum Inf. Process.}
\begin{document}
\title{$\mathbb{F}_q\mathcal{R}$-skew cyclic codes and their application to quantum codes}

\titlerunning{$\mathbb{F}_q\mathcal{R}$-skew cyclic codes and their application to quantum codes}        

\author{Om Prakash$^{* 1}$ \and  Shikha Patel$^{1}$ \and Habibul Islam$^{2}$  
}
\authorrunning{Om Prakash \and  Shikha Patel \and Habibul Islam }

\institute{\at
              $^{1}$Department of Mathematics,
              Indian Institute of Technology Patna, Patna 801 106, India \\
              $^{2}$School of Computer Science,          University of St. Gallen, Switzerland \\
              \email{om@iitp.ac.in(*corresponding author), shikha\_1821ma05@iitp.ac.in, habibul.pma17@iitp.ac.in}
                          }
\date{Received: date / Accepted: date}
\maketitle	

\begin{abstract}
Let $p$ be a prime and $\mathbb{F}_q$ be the finite field of order $q=p^m$. In this paper, we study $\mathbb{F}_q\mathcal{R}$-skew cyclic codes where $\mathcal{R}=\mathbb{F}_q+u\mathbb{F}_q$ with $u^2=u$. To characterize $\mathbb{F}_q\mathcal{R}$-skew cyclic codes, we first establish their algebraic structure and then discuss the dual-containing properties by considering a non-degenerate inner product. Further, we define a Gray map over $\mathbb{F}_q\mathcal{R}$ and obtain their $\mathbb{F}_q$-Gray images. As an application, we apply the CSS (Calderbank-Shor-Steane) construction on Gray images of dual containing $\mathbb{F}_q\mathcal{R}$-skew cyclic codes and obtain many quantum codes with better parameters than the best-known codes available in the literature.\\
\end{abstract}
\keywords {Cyclic codes  \and  Skew cyclic codes  \and  Self-orthogonal codes   \and  Gray map \and CSS construction \and Quantum codes.}
\subclass{11T71  \and  11T06  \and  94B05  \and  94B15.}
\section{Introduction}
The family of skew cyclic codes is an emerging and interesting class of linear codes introduced by Boucher et al. \cite{Boucher1} in $2007$. In $2009$, they have obtained some new linear codes with record-breaking parameters from skew cyclic codes \cite{Boucher2}. Remarkably, these works have considered the length $n$ of the code as an integral multiple of the order of underlying automorphism. In $2011$, Siap et al. \cite{Siap} relaxed this restriction from $n$ to investigate skew cyclic codes of any length over a finite field. In addition to quasi-cyclic (QC) and quasi-twisted (QT) codes over commutative rings \cite{Ak,Pandey22}, skew linear codes such as skew cyclic, skew QC and skew QT codes produce some excellent parameters \cite{Boucher1,Boucher2}. In fact, in the last two decades, skew cyclic codes produced some new parameters. Therefore, exploring skew cyclic codes remains one of the active research areas. Currently, many researchers have been working on these codes over different alphabets, particularly to produce new codes over finite rings, we refer \cite{Gursoy,P1,Siap}. Also, the number of skew cyclic codes is directly proportional to the number of factors of $x^n-1$. Since the skew polynomial ring is a non-UFD, it  generally possesses more factorization of a polynomial in comparison to a commutative ring. Hence, it is one of the reasons to study such noncommutative skew codes.

On the other hand, many researchers have also worked on codes over mixed alphabets for the last two decades. Linear codes over mixed alphabets were introduced by Brouwer et al. \cite{Bro} in $1998$. In $2010$, Borges et al. \cite{Bor1} studied $\mathbb{Z}_2\mathbb{Z}_4$-linear codes and discussed their generator matrices and parity-check matrices. In $2014$, Abualrub et al. \cite{A} investigated $\mathbb{Z}_2\mathbb{Z}_4$-cyclic codes and proved that the dual of a $\mathbb{Z}_2\mathbb{Z}_4$-cyclic code is again a $\mathbb{Z}_2\mathbb{Z}_4$-cyclic code. Further, in $2015$ Aydogdu et al. \cite{Aydo4} generalized these additive codes to $\mathbb{Z}_{p^r}\mathbb{Z}_{p^s}$. Again, in $2018$ Aydogdu et al. \cite{Aydo3} generalized $\mathbb{Z}_2\mathbb{Z}_4$-cyclic codes to $\mathbb{Z}_2\mathbb{Z}_4\mathbb{Z}_8$-cyclic codes. For more related works on mixed alphabets, we refer \cite{Aydo1,Aydo2,Aydo5,Bor2,Islam.amca,Prakash.amc,Wu}. Recently, in $2020$, Benbelkacem et al. \cite{MN} studied mixed alphabet skew cyclic codes. Here, we discuss structural properties of skew cyclic codes of length $(\alpha,\beta)$
 over the mixed alphabet $\mathbb{F}_q\mathcal{R}$ where $\mathcal{R}=\mathbb{F}_q+u\mathbb{F}_q$ with $u^2=u$. In particular, if $\alpha=0$, we get skew cyclic codes over $\mathcal{R}$ discussed by Gursoy et al. \cite{Gursoy}, and if $\beta=0$, we get skew cyclic codes over $\mathbb{F}_q$. Therefore, we generalize both codes over $\mathbb{F}_q$ and $\mathcal{R}$, which is one of our motivations for this study.

 On the other hand, quantum information theory is a fascinating area of research. It introduces concepts from computer science, classical information theory, and quantum mechanics all at once. Results and techniques of various branches of mathematics, mathematical physics, and quantum statistical physics find applications in this fast-growing field. Classical information theory deals with information-processing functions like storage and transmission of information, whereas quantum information theory studies how these tools can be accomplished using quantum mechanical systems. In classical information theory, error correction occurs in bits; however, in quantum information theory, error correction occurs in quantum states (qubits). Both are concerned with one of the fundamental problems of communication namely reliable transmission of information over a noisy channel. The implementation of quantum error-correcting codes is widely acknowledged as necessary for the development of large-scale, viable quantum computers and the usage of quantum communication. Quantum error-correcting codes (QECCs) have been used to protect quantum information from errors caused by decoherence and other types of quantum noise.
Classical computers use a large number of electrons, so when one goes wrong, it is not too severe. However, in quantum computers single qubit
will probably be just one or a small number of particles. This single qubit already creates a need for some sort of error correction. Therefore, to build reliable quantum computers, quantum error correction will be necessarily needed, while classical computers do not use error correction \cite{Gottesman}.
The construction of quantum error-correcting codes from classical error-correcting codes was independently invented by Shor \cite{Shor} and Steane \cite{Steane}. In 1998, Calderbank et al. \cite{Calderbank}
proposed a systematic method to obtain quantum codes from classical codes, known as
CSS (Calderbank-Shor-Steane) construction.
Under this construction, several investigations have been carried out, for  instance, quantum codes from cyclic codes over
$\mathbb{F}_4+u\mathbb{F}_4,u^2=0$ are found in \cite{Kai11}. In the literature \cite{Dertli15,Dertli15a,Islam18b,Islam.ccds,Qian,Sari}, binary and non-binary
quantum codes from cyclic codes over finite non-chain rings are
well studied. For a detailed study of quantum codes from linear codes, we refer \cite{Alahmadi,Edel,Grassl04,Islam19,Islam20b,Islam20a,Shi20}.
The number of (dual containing) skew codes depends on the factorization of $x^n-1$, and hence the major reason behind the investigation of skew cyclic codes in the construction of quantum code lies in the factorization of $x^n- 1$, which sometimes has more choices for $h(x)$ in a skew polynomial ring. We note that quantum codes from skew codes have appeared in very few papers \cite{bag,Prakash2,Prakash,Verma21}.

Further, we notice that in the case of mixed alphabets, researchers mainly focused on exploring the structural properties such as generator matrices, parity check matrices, generating polynomials, minimal generating sets, generating polynomials for dual codes, etc. Presently, only a few works are available in the literature on the applications of mixed alphabet codes \cite{Islam20b,Li20}. Recently, Li et al. \cite{Li20} studied $\mathbb{F}_qR$-linear skew constacyclic codes and constructed quantum codes by the Hermitian construction. In continuation with that framework, here we find quantum codes in the case of  noncommutative mixed alphabets codes. Two significant contributions of the paper are given below.
\begin{enumerate}
    \item We discuss the algebraic structure of $\mathbb{F}_q\mathcal{R}$-skew cyclic codes and their dual codes. Then a necessary and sufficient condition of these codes to contain their dual is provided.
    \item As an application, we obtain quantum codes with better parameters (compared to the recent literature) from dual containing codes.
\end{enumerate}	
The paper is structured as follows: In Section $2$, we present some basic results and definitions followed by the algebraic structure of the ring $\mathbb{F}_q\mathcal{R}$. Section $3$ presents the algebraic properties of $\mathbb{F}_q\mathcal{R}$-skew cyclic codes. We have shown that the dual of an $\mathbb{F}_q\mathcal{R}$-skew cyclic code is again an $\mathbb{F}_q\mathcal{R}$-skew cyclic code. Section $4$ includes the Gray map through which we discuss images of $\mathbb{F}_q\mathcal{R}$-skew cyclic codes. Moreover, we have provided some examples of $\mathbb{F}_q\mathcal{R}$-skew cyclic codes and their Gray images over $\mathbb{F}_q$. Section $5$ gives a method to construct quantum codes. Section $6$ concludes the paper.
	\section{Preliminary}
	 Let $\mathbb{F}_q$ be a finite field and  $q=p^m$ for some prime $p$ and a positive integer $m$. A $k$-dimensional subspace $\mathcal{C}$ of $\mathbb{F}_q^n$ is said to be a linear code of length $n$ over $\mathbb{F}_q$ and every element $c=(c_0,c_1,\dots,c_{n-1})\in \mathcal{C}$ is called a codeword. Let $c\in\mathbb{F}_q^n$. The Hamming weight of $c$, denoted by $w_H(c)$, is defined as the number of nonzero coordinates in $c$. For any two vectors $c$ and $c'$ in  $\mathbb{F}_q^n$, the
	Hamming distance between $c$ and $c'$, denoted by $d_H(c,c')$, is defined as the number
	of places in which $c$ and $c'$ differ.
 For a linear code $\mathcal{C}$, the Hamming distance is given by $$d_H(\mathcal{C})=\min\{d_H(c,c')~|~ c\neq c', \text{for all}~ c,c'\in \mathcal{C} \}.$$
Let $c=(c_0,c_1,\dots,c_{n-1})$ and $c'=(c'_0,c'_1,\dots,c'_{n-1})$	be two vectors. The Euclidean inner product of $c$ and $c'$ in  $\mathbb{F}_q^n$  is defined by  $c\cdot c' = \sum_{i=0}^{n-1}c_ic'_i$. The dual code of $\mathcal{C}$ is defined by  $\mathcal{C}^\perp= \{c\in \mathbb{F}_q^n~|~ c\cdot c' = 0, ~\text{for all} ~c'\in \mathcal{C} \}$. A linear code $\mathcal{C}$ is self-orthogonal if $\mathcal{C}\subseteq \mathcal{C}^\perp$ and self-dual if $\mathcal{C}=\mathcal{C}^\perp$. The parameters of a code with minimum distance $d$ and dimension $k$ are written as $[n, k, d]$. If $\mathcal{C}$ is an $[n, k, d]$ code, then from
the Singleton bound, its minimum distance is bounded above by $$d \leq n - k + 1.$$
A code meeting the above bound  with equality is called maximum-distance-separable (MDS). We call a code is
almost MDS if its minimum distance is one unit less than the MDS bound. A code is called optimal if it has the highest possible minimum distance for a given length and dimension. \\
Throughout this paper, $\mathcal{R}$ represents the quotient ring $\mathbb{F}_q[u]/\langle u^2-u \rangle$. It can be easily seen that $\mathcal{R}$ is a finite commutative non-chain ring containing $q^2$ elements with characteristic $p$. It is a semi-local ring
with maximal ideals $\langle u \rangle$  and $\langle 1-u \rangle$. Let $\xi_1 = 1-u,
\xi_2 = u$. Then $\xi_i^2=\xi_i$,  $\xi_i\xi_j=0$ and $\sum_{k=1}^{2}\xi_k=1$ where $i, j = 1, 2$ and $i\neq j$.Thus, we have
$$ \mathcal{R}= \xi_1\mathcal{R} \oplus \xi_2\mathcal{R}.$$ Further, $\xi_i\mathcal{R} \cong \xi_i\mathbb{F}_q$, $i = 1, 2$. Therefore, any $r \in \mathcal{R}$ can be uniquely written as $r =\sum_{i=1}^{2}\xi_ia_i$ where $a_i\in \mathbb{F}_q$ for $i = 1, 2$. A linear code $\mathcal{C}$ of length $n$ over $\mathcal{R}$ is an $\mathcal{R}$-submodule of $\mathcal{R}^n$. Let $B_i$ be a linear code over $\mathcal{R}$, for $i=1,2$. Then the operations $\oplus$ and $\otimes$ are defined as
$$B_1\oplus B_2=\{b_1+b_2~|~ b_i\in B_i~ \text{for all}~i\}$$ and $$B_1\otimes B_2=\{(b_1,b_2)~|~ b_i\in B_i~ \text{for all}~i\},$$  respectively. We know that any element of the ring $\mathcal{R}$ can be uniquely written as $$a+ub=(a+b)u + a(1-u)$$ where $a, b\in \mathbb{F}_q$. Therefore, $a + ub \in \mathcal{R}$ is a unit if and only if $a, a+b\in \mathbb{F}_q^*$. Now, following \cite{Gursoy}, let $\mathcal{C}$ be a linear code of length $n$ over $\mathcal{R}$ and
\begin{align*}
\mathcal{C}_1&=\{\bold{a}\in  \mathbb{F}_q^n~ |~\bold{a}+u\bold{b}\in \mathcal{C},~ \text{for some}~ \bold{b}\in  \mathbb{F}_q^n   \};\\
\mathcal{C}_2&=\{\bold{a}+\bold{b}\in  \mathbb{F}_q^n~ |~\bold{a}+u\bold{b}\in \mathcal{C}  \}
\end{align*}
Then $\mathcal{C}_1$ and $\mathcal{C}_2$ are linear codes of length $n$ over $\mathbb{F}_q$ and $\mathcal{C}$ can be uniquely written as $\mathcal{C}= \xi_1\mathcal{C}_1 \oplus \xi_2\mathcal{C}_2$. Let $s=a+ub\in \mathcal{R}$ and $c=(c_0,c_1,\dots,c_{n-1})\in \mathcal{C}$ where $c_i=a_i+ub_i$ with $a_i,b_i\in \mathbb{F}_q$ for $0\leq i\leq n-1$. The $i^{th}$-entry of $sc$ is given as
\begin{align*}
(a+ub)(a_i+ub_i)&=((a+b)u-a(u-1))(a_i+ub_i)\\
&=aa_i+u(ba_i+(b+a)b_i)\\
&=(1-u)r+u(r+t)
\end{align*}
where $r=aa_i$ and $t=ba_i+bb_i+ab_i$. Hence, $sc$ can be written in terms of $\mathcal{C}_1$ and $\mathcal{C}_2$ with $r=a(a_0,a_1,\dots,a_{n-1})$ and $t=(a+b)(b_0,b_1,\dots,b_{n-1})+b(a_0,a_1,\dots,a_{n-1})$.\\
Now, we introduce two classes of noncommutative rings $\mathbb{F}_q[x;\Theta]$ and $\mathcal{R}[x;\theta]$.
\begin{definition}
Let $\Theta$ be an automorphism of $\mathbb{F}_q$ defined by $\Theta(a)=a^{p^i}$ and $q=p^m$. We consider the set
$$\mathbb{F}_q[x;\Theta]=\{b_ex^e+\cdots+b_1x+b_0~|~ b_i\in \mathbb{F}_q~ \text{and}~ e\in \mathbb{N}\}.$$ Then $\mathbb{F}_q[x;\Theta]$ is a noncommutative ring unless $\Theta$ is the identity under the usual addition of polynomials and multiplication defined by $(ax^i)(bx^j)=a\Theta^i(b)x^{i+j}$ for $a, b\in \mathbb{F}_q$ and associative and distributive rules over $\mathbb{F}_q[x;\Theta]$.
\end{definition}
Now, consider $$\mathcal{R}[x;\theta]=\{a_lx^l+\cdots+a_1x+a_0~|~ a_i\in \mathcal{R}~ \text{and}~ l\in \mathbb{N}\}$$ where $\theta$ is the automorphism of $\mathcal{R}$ defined by $\theta(a+ub)=a^{p^{i}}+ub^{p^{i}}$. Hence the order of the automorphism is $|\langle \theta \rangle|=\frac{m}{\gcd {(m,i)}}$ and in particular,  $|\langle \theta \rangle|=\frac{m}{i}$ when $i|m$. Also, the subring of $\mathbb{F}_{p^i}+u\mathbb{F}_{p^i}$  is invariant under $\theta$. Then $\mathcal{R}[x;\theta]$ is a noncommutative ring unless $\theta$ is the identity under the usual addition of polynomials and multiplication defined by  $(sx^i)(s'x^j)=s\theta^i(s')x^{i+j}$ for $s,s'\in \mathcal{R}$ and associative and distributive rules over $\mathcal{R}[x;\theta]$. This ring is known as a skew polynomial ring.
 \begin{definition}
 A linear code $\mathscr{C}$ of length $\beta$ over $\mathcal{R}$ is said to be a skew cyclic code with respect to the automorphism $\theta$ if for any codeword $c=(c_0,c_1,...,c_{\beta-1})\in \mathscr{C}$ we have $\delta(c)=(\theta(c_{\beta-1}),\theta({c_0}),...,\theta(c_{\beta-2})) \in \mathscr{C}$ where $\delta$ is a skew cyclic shift of $\mathscr{C}$.
 \end{definition}
Here, $\frac{\mathcal{R}[x;\theta]}{\langle x^\beta-1 \rangle}$ forms a ring when $|\theta|$ divide $\beta$. But for $|\theta|\nmid \beta$, $\frac{\mathcal{R}[x;\theta]}{\langle x^\beta-1 \rangle}$ fails to be a ring. However, in that case it is a
left $\mathcal{R}[x;\theta]$-module under the left multiplication defined by $f(x)\big(g(x)+\langle x^\beta-1\rangle\big)=f(x)g(x)+\langle x^\beta-1\rangle$
where $f(x),g(x)\in \mathcal{R}[x;\theta]$. As usual, it is convenient to identify each codeword $c= (c_0,c_1,...,c_{\beta-1})\in \mathscr{C}\subseteq \mathcal{R}^\beta$ with the polynomial $c(x)=c_0+c_1x+\cdots+c_{\beta-1}x^{\beta-1}\in \frac{\mathcal{R}[x;\theta]}{\langle x^\beta-1 \rangle}$ under the correspondence $c=(c_0,c_1,...,c_{\beta-1})\longrightarrow c(x)=c_0+c_1x+\cdots+c_{\beta-1}x^{\beta-1}$.  As a consequence, every code
of length $\beta$ over $\mathcal{R}$ can be viewed as a subset of $\frac{\mathcal{R}[x;\theta]}{\langle x^\beta-1 \rangle}$.  The following theorems are used to derive the results in further sections.
 \begin{theorem}\cite[Theorem 3]{Gursoy}
 Let $\mathscr{C}= \xi_1\mathcal{C}_1 \oplus \xi_2\mathcal{C}_2$ be a linear code of length $\beta$ over $\mathcal{R}$. Then $\mathscr{C}$ is a skew cyclic code of length $\beta$ over $\mathcal{R}$ if and only if $\mathcal{C}_1$ and $\mathcal{C}_2$ are skew cyclic codes of length $\beta$ over $\mathbb{F}_q$.
 \end{theorem}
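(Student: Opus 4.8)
The plan is to transport the statement through the idempotent decomposition $\mathcal{R}=\xi_1\mathcal{R}\oplus\xi_2\mathcal{R}$ and reduce it to the componentwise action of the skew cyclic shift. The one structural fact that makes everything work is that $\theta$ fixes the idempotents: since $\theta(a+ub)=a^{p^i}+ub^{p^i}$ gives $\theta(u)=u$ and $\theta(1)=1$, we get $\theta(\xi_1)=\xi_1$ and $\theta(\xi_2)=\xi_2$. Consequently, for any $a\in\mathbb{F}_q$ one has $\theta(\xi_j a)=\xi_j a^{p^i}=\xi_j\Theta(a)$, so under the identification $\xi_j\mathcal{R}\cong\mathbb{F}_q$ the automorphism $\theta$ restricts to the Frobenius automorphism $\Theta$ on each coordinate. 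First I would establish this compatibility explicitly, since it is the only place where the specific form of $\theta$ enters.

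Next I would track what the shift $\delta$ does under the decomposition. Writing a codeword $c=(c_0,\dots,c_{\beta-1})\in\mathscr{C}$ with $c_j=\xi_1 x_j+\xi_2 y_j$, so that $\mathbf{x}=(x_0,\dots,x_{\beta-1})\in\mathcal{C}_1$ and $\mathbf{y}=(y_0,\dots,y_{\beta-1})\in\mathcal{C}_2$, the compatibility above yields $\theta(c_{j-1})=\xi_1\Theta(x_{j-1})+\xi_2\Theta(y_{j-1})$, with indices read modulo $\beta$. Hence $\delta(c)=\xi_1\delta_1(\mathbf{x})+\xi_2\delta_2(\mathbf{y})$, where $\delta_1$ and $\delta_2$ denote the ordinary skew cyclic shifts on $\mathbb{F}_q^\beta$ determined by $\Theta$. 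In other words, under the isomorphism $\mathscr{C}\cong\mathcal{C}_1\times\mathcal{C}_2$ the map $\delta$ is exactly the pair $(\delta_1,\delta_2)$.

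With this in hand both implications are short. For the forward direction, assume $\mathscr{C}$ is skew cyclic and take $\mathbf{x}\in\mathcal{C}_1$; since $\mathscr{C}=\xi_1\mathcal{C}_1\oplus\xi_2\mathcal{C}_2$ realises every pair and $\mathbf{0}\in\mathcal{C}_2$, the word $c=\xi_1\mathbf{x}$ lies in $\mathscr{C}$, so $\delta(c)=\xi_1\delta_1(\mathbf{x})\in\mathscr{C}$, and uniqueness of the $\xi_1$-component forces $\delta_1(\mathbf{x})\in\mathcal{C}_1$; the same argument with $c=\xi_2\mathbf{y}$ handles $\mathcal{C}_2$. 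For the converse, if $\mathcal{C}_1$ and $\mathcal{C}_2$ are both skew cyclic then for any $c=\xi_1\mathbf{x}+\xi_2\mathbf{y}\in\mathscr{C}$ we have $\delta_1(\mathbf{x})\in\mathcal{C}_1$ and $\delta_2(\mathbf{y})\in\mathcal{C}_2$, whence $\delta(c)=\xi_1\delta_1(\mathbf{x})+\xi_2\delta_2(\mathbf{y})\in\xi_1\mathcal{C}_1\oplus\xi_2\mathcal{C}_2=\mathscr{C}$.

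I do not expect a genuine obstacle here: the content is the CRT-style decomposition, and the only point demanding care is confirming that $\theta$ fixes $\xi_1,\xi_2$ and acts as the \emph{same} Frobenius $\Theta$ on both components, so that the induced maps $\delta_1,\delta_2$ really are skew cyclic shifts over $\mathbb{F}_q$. The remaining work is purely bookkeeping with the indices modulo $\beta$ and the uniqueness of the idempotent decomposition.
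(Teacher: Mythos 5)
The paper states this result only as a citation to Gursoy et al.\ and gives no proof of its own, so there is nothing internal to compare against; your argument is correct and is essentially the standard proof from the cited reference: one checks that $\theta$ fixes the idempotents $\xi_1,\xi_2$ and restricts to $\Theta$ on each component $\xi_j\mathbb{F}_q$, so the skew shift $\delta$ over $\mathcal{R}$ decomposes as the pair of skew shifts $(\delta_1,\delta_2)$ over $\mathbb{F}_q$, and both implications follow from the uniqueness of the decomposition $\mathscr{C}=\xi_1\mathcal{C}_1\oplus\xi_2\mathcal{C}_2$. I see no gaps.
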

 \begin{theorem}\cite[Theorem 5]{Gursoy} \label{th2}
 Let $\mathscr{C}= \xi_1\mathcal{C}_1\oplus \xi_2\mathcal{C}_2$ be a skew cyclic code of length $\beta$ over $\mathcal{R}$. Let $g_1(x)$ and $g_2(x)$ be generator polynomials of $\mathcal{C}_1$ and $\mathcal{C}_2$, respectively. Then $\mathscr{C}=\langle
 \xi_1g_1(x)+\xi_2g_2(x)\rangle$.
 \end{theorem}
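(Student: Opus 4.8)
The plan is to prove the two inclusions $\langle \xi_1 g_1(x)+\xi_2 g_2(x)\rangle \subseteq \mathscr{C}$ and $\mathscr{C}\subseteq \langle \xi_1 g_1(x)+\xi_2 g_2(x)\rangle$ separately, working inside the left $\mathcal{R}[x;\theta]$-module $\mathcal{R}[x;\theta]/\langle x^\beta-1\rangle$ and exploiting the orthogonal idempotent decomposition $\mathcal{R}=\xi_1\mathcal{R}\oplus\xi_2\mathcal{R}$. The single fact that makes the whole argument go through is that $\theta$ \emph{fixes} the idempotents: since $\theta(a+ub)=a^{p^i}+ub^{p^i}$, we get $\theta(\xi_1)=\theta(1-u)=1-u=\xi_1$ and $\theta(\xi_2)=\theta(u)=u=\xi_2$. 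Consequently $x\xi_i=\theta(\xi_i)x=\xi_i x$, so each $\xi_i$ commutes with $x$ and hence with every polynomial; moreover $\theta$ restricts to $\Theta$ on $\mathbb{F}_q$, so the assignment $\xi_1 f_1(x)+\xi_2 f_2(x)$ identifies $\mathcal{R}[x;\theta]/\langle x^\beta-1\rangle$ with $\big(\mathbb{F}_q[x;\Theta]/\langle x^\beta-1\rangle\big)^2$ compatibly with both the skew multiplication and the componentwise products, since every cross term carries a factor $\xi_1\xi_2=0$. I would record this identification first, as it reduces each later computation to the known $\mathbb{F}_q$-theory in which $\mathcal{C}_i=\langle g_i(x)\rangle$.

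For the inclusion $\langle \xi_1 g_1(x)+\xi_2 g_2(x)\rangle\subseteq\mathscr{C}$, I would take an arbitrary element $h(x)\big(\xi_1 g_1(x)+\xi_2 g_2(x)\big)$ with $h(x)\in\mathcal{R}[x;\theta]/\langle x^\beta-1\rangle$ and write $h(x)=\xi_1 h_1(x)+\xi_2 h_2(x)$ via the identification above. Multiplying out and using $\xi_i^2=\xi_i$, $\xi_1\xi_2=0$, and $x\xi_i=\xi_i x$, every mixed term vanishes and one is left with $\xi_1\big(h_1(x)g_1(x)\big)+\xi_2\big(h_2(x)g_2(x)\big)$. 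Reducing modulo $x^\beta-1$ componentwise, $h_1(x)g_1(x)\in\langle g_1(x)\rangle=\mathcal{C}_1$ and $h_2(x)g_2(x)\in\langle g_2(x)\rangle=\mathcal{C}_2$, so the product lies in $\xi_1\mathcal{C}_1\oplus\xi_2\mathcal{C}_2=\mathscr{C}$.

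For the reverse inclusion, I would show that the two generating pieces $\xi_1\mathcal{C}_1$ and $\xi_2\mathcal{C}_2$ each sit inside $\langle \xi_1 g_1(x)+\xi_2 g_2(x)\rangle$ and then add. A typical element of $\xi_1\mathcal{C}_1$ has the form $\xi_1 f(x)g_1(x)$ with $f(x)\in\mathbb{F}_q[x;\Theta]/\langle x^\beta-1\rangle$; the computation $\xi_1 f(x)\cdot\big(\xi_1 g_1(x)+\xi_2 g_2(x)\big)=\xi_1 f(x)g_1(x)$ (again the $\xi_2$-term dies) exhibits it as a left multiple of the generator, and symmetrically for $\xi_2\mathcal{C}_2$. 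Since $\mathscr{C}=\xi_1\mathcal{C}_1\oplus\xi_2\mathcal{C}_2$ and $\langle \xi_1 g_1(x)+\xi_2 g_2(x)\rangle$ is closed under addition, this yields $\mathscr{C}\subseteq\langle \xi_1 g_1(x)+\xi_2 g_2(x)\rangle$, completing the equality.

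I do not expect a genuinely hard step here; the proof is essentially a Chinese-remainder/idempotent-splitting argument. The one point that requires care — and the only place the noncommutativity could bite — is verifying that $\theta$ fixes $\xi_1,\xi_2$ so that the idempotents stay central relative to $x$; were they twisted by $\theta$, both the clean splitting $\mathcal{R}[x;\theta]\cong\big(\mathbb{F}_q[x;\Theta]\big)^2$ and the vanishing of the cross terms would fail. A secondary technical nuisance is the case $|\theta|\nmid\beta$, where $\mathcal{R}[x;\theta]/\langle x^\beta-1\rangle$ is only a left module and not a ring, so throughout I would read $\langle\,\cdot\,\rangle$ as the left submodule generated by the element and take care to multiply generators on the left only.
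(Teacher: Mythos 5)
Your argument is correct, and the key observation you isolate --- that $\theta(a+ub)=a^{p^i}+ub^{p^i}$ fixes the idempotents $\xi_1=1-u$, $\xi_2=u$, so that $x\xi_i=\xi_i x$ and the cross terms $\xi_1\xi_2$ vanish --- is exactly what makes the idempotent splitting compatible with the skew multiplication. Note that the paper itself gives no proof of this statement: it is quoted verbatim as \cite[Theorem 5]{Gursoy}, and your two-inclusion computation $(\xi_1 f_1+\xi_2 f_2)*(\xi_1 g_1+\xi_2 g_2)=\xi_1 f_1 g_1+\xi_2 f_2 g_2$ is essentially the standard argument from that reference, so there is nothing to reconcile. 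Your closing caveat about reading $\langle\cdot\rangle$ as a left submodule when $|\langle\theta\rangle|\nmid\beta$ is also the right precaution and consistent with how the paper treats $\mathcal{R}[x;\theta]/\langle x^\beta-1\rangle$.
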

 Let $s=a+ub$ be an element of $\mathcal{R}$. Then we define a map $\eta: \mathcal{R} \rightarrow \mathbb{F}_q$ as follows
 $$\eta(s)=a.$$ It is clear that $\eta$ is a ring homomorphism. Let $\mathbb{F}_q\mathcal{R} =\{(x,y)~|~x\in \mathbb{F}_q~ \text{and}~ y\in\mathcal{R}\}$. Now for any $s\in \mathcal{R}$ and $(x,y)\in \mathbb{F}_q\mathcal{R}$, we define an $\mathcal{R}$-scalar multiplication on $\mathbb{F}_q\mathcal{R}$ as $$*: \mathcal{R}\times \mathbb{F}_q\mathcal{R} \longrightarrow \mathbb{F}_q\mathcal{R}$$ such that
 $$s*(x,y)=(\eta(s)x,sy).$$ It is easy to verify that $\mathbb{F}_q\mathcal{R}$ is an $\mathcal{R}$-module under this multiplication. Further, it can be extended componentwise to $\mathbb{F}_q^\alpha\times\mathcal{R}^\beta$ as follows:
 $$*: \mathcal{R}\times (\mathbb{F}_q^\alpha\times\mathcal{R}^\beta) \longrightarrow \mathbb{F}_q^\alpha\times\mathcal{R}^\beta$$ where
 $$s*l=(\eta(s)x_0,\eta(s)x_1,\dots,\eta(s)x_{\alpha-1},sy_0,sy_1,\dots,sy_{\beta-1});$$ with $s\in \mathcal{R}$ and $l=(x_0,x_1,\dots,x_{\alpha-1},y_0,y_1,\dots,y_{\beta-1})\in\mathbb{F}_q^\alpha\times\mathcal{R}^\beta $. Under this multiplication, $\mathbb{F}_q^\alpha\times\mathcal{R}^\beta$ forms an $\mathcal{R}$-module.
 \begin{definition}
 A nonempty subset $\mathcal{C}$ of $\mathbb{F}_q^\alpha\times\mathcal{R}^\beta$ is said to be an $\mathbb{F}_q\mathcal{R}$-linear code of length $(\alpha,\beta)$ if $\mathcal{C}$ is an $\mathcal{R}$-submodule of $\mathbb{F}_q^\alpha\times\mathcal{R}^\beta$.
 \end{definition}
Now, we define an inner product between $l=(x_0,x_1,\dots,x_{\alpha-1},y_0,y_1,\dots,y_{\beta-1})$ and $l'=(x_0',x_1',\dots,x_{\alpha-1}',y_0',y_1',\dots,y_{\beta-1}')$ as
 $$l\cdot l' =u\sum_{i=0}^{\alpha -1}x_ix_i'+\sum_{j=0}^{\beta -1}y_jy_j'.$$ Let $\mathcal{C}$ be an $\mathbb{F}_q\mathcal{R}$-linear code of length $(\alpha,\beta)$. Then the dual code of $\mathcal{C}$ is defined as
 $$\mathcal{C}^\perp=\{l'\in \mathbb{F}_q^{\alpha}\times\mathcal{R}^{\beta}~|~l\cdot l'=0~ \forall~ l\in \mathcal{C}\}.$$
 Denote $R_{\alpha,\beta}=\frac{\mathbb{F}_q[x;\Theta]}{\langle x^\alpha-1\rangle}\times\frac{\mathcal{R}[x;\theta]}{\langle x^\beta-1\rangle}$, $k(x)=k_0+k_1x+\cdots+k_{\alpha-1}x^{\alpha-1}\in \frac{\mathbb{F}_q[x;\Theta]}{\langle x^\alpha-1\rangle}$ and $t(x)=t_0+t_1x+\cdots+t_{\beta-1}x^{\beta-1}\in \frac{\mathcal{R}[x;\theta]}{\langle x^\beta-1\rangle}$. Then any vector  $c=(k_0,k_1,\dots,k_{\alpha-1},t_0,t_1,\dots,t_{\beta-1})\in \mathbb{F}_q^\alpha\times\mathcal{R}^\beta$ can be identified with a polynomial of the form
 $$c(x)=(k(x),t(x))\in R_{\alpha,\beta}$$ which gives a one-to-one correspondence between $\mathbb{F}_q^\alpha\times\mathcal{R}^\beta$ and $R_{\alpha,\beta}$. Let $r(x)=r_0+r_1x+\cdots+r_{\gamma-1}x^{\gamma-1}\in \mathcal{R}[x;\theta]$ and $(k(x),t(x))\in R_{\alpha,\beta}$. Then the multiplication $*$ in $R_{\alpha,\beta}$ is defined as follows:
 $$r(x)*(k(x),t(x))=(\eta(r(x))k(x),r(x)t(x))$$ where $\eta(r(x))=\eta(r_0)+\eta(r_1)x+\cdots+\eta(r_{\gamma-1})x^{\gamma-1}$. Then under the above defined multiplication $*$, $R_{\alpha,\beta}$ forms a left $\mathcal{R}[x;\theta]$-module.
\section{Structural Properties of $\mathbb{F}_q\mathcal{R}$-Skew Cyclic Codes}
In this section, first, we define $\mathbb{F}_q\mathcal{R}$-skew cyclic codes. Then algebraic properties of these codes are discussed in detail. Also,  the form of a  generator polynomial of an $\mathbb{F}_q\mathcal{R}$-skew cyclic code in $R_{\alpha,\beta}$ is established.
\begin{definition}
An $\mathcal{R}$-submodule $\mathcal{C}$ of $\mathbb{F}_q^\alpha\times\mathcal{R}^\beta$ is said to be an $\mathbb{F}_q\mathcal{R}$-skew cyclic code of length $n=(\alpha,\beta)$ if for any $c=(x_0,x_1,\dots,x_{\alpha-1},y_0,y_1,\dots,y_{\beta-1})\in\mathcal{C}$ we have $\sigma(c):=(\Theta(x_{\alpha-1}),\Theta(x_0),\Theta(x_1),\dots,\Theta(x_{\alpha-2}),\theta(y_{\beta-1}),\theta(y_0),\theta(y_1),\dots,\theta(y_{\beta-2}))\in\mathcal{C}$ where $\sigma$ is the skew cyclic shift of $c$.
\end{definition}
\begin{theorem}\label{thdual}
Let $\mathcal{C}$ be an $\mathbb{F}_q\mathcal{R}$-skew cyclic code of length $n=(\alpha,\beta)$ and $|\langle\theta\rangle|$ divide $\beta$. Then $\mathcal{C}^\perp$ is also an $\mathbb{F}_q\mathcal{R}$-skew cyclic code of length $n=(\alpha,\beta)$.
\end{theorem}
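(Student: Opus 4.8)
The plan is to verify the two defining properties for $\mathcal{C}^\perp$: that it is an $\mathcal{R}$-submodule of $\mathbb{F}_q^\alpha\times\mathcal{R}^\beta$, and that it is invariant under the skew cyclic shift $\sigma$. For the submodule property the only nonobvious point is closure under $\mathcal{R}$-scalar multiplication, and I would first record the adjoint symmetry of the inner product: writing $l=(x_0,\dots,x_{\alpha-1},y_0,\dots,y_{\beta-1})$ and expanding both sides, one checks $(s*l)\cdot l' = u\,\eta(s)\sum_i x_i x_i' + s\sum_j y_j y_j' = l\cdot(s*l')$ for all $s\in\mathcal{R}$. Granting this, if $l'\in\mathcal{C}^\perp$ and $s\in\mathcal{R}$, then for every $l\in\mathcal{C}$ we get $l\cdot(s*l') = (s*l)\cdot l' = 0$, the last equality holding because $s*l\in\mathcal{C}$ (as $\mathcal{C}$ is a submodule) while $l'\in\mathcal{C}^\perp$; closure under addition is immediate from bilinearity of $\cdot$. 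Hence $\mathcal{C}^\perp$ is an $\mathcal{R}$-submodule.

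The heart of the argument is the identity $\sigma(l)\cdot\sigma(l') = \theta(l\cdot l')$, valid for all $l,l'$. To derive it I would use that $\theta$ restricts to $\Theta$ on $\mathbb{F}_q$ and fixes $u$, so that $u\,\Theta(c) = \theta(uc)$ for $c\in\mathbb{F}_q$. Since $\sigma$ applies $\Theta$ (resp. $\theta$) coordinatewise and cyclically rotates each of the two coordinate blocks, reindexing each sum over a complete cycle gives $\sigma(l)\cdot\sigma(l') = u\,\Theta\!\big(\sum_i x_i x_i'\big) + \theta\!\big(\sum_j y_j y_j'\big) = \theta\!\big(u\sum_i x_i x_i' + \sum_j y_j y_j'\big) = \theta(l\cdot l')$, where the middle step uses that $\Theta$ and $\theta$ are ring homomorphisms.

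With these two facts, invariance is quick. Being a bijection of the finite set $\mathbb{F}_q^\alpha\times\mathcal{R}^\beta$, the map $\sigma$ has finite order $M$, so $\sigma^M=\mathrm{id}$; and since $\mathcal{C}$ is closed under $\sigma$, the iterate $\sigma^{M-1}(l)$ of any $l\in\mathcal{C}$ again lies in $\mathcal{C}$. Now fix $l'\in\mathcal{C}^\perp$ and arbitrary $l\in\mathcal{C}$, and apply the identity to the pair $(\sigma^{M-1}(l),l')$: this yields $l\cdot\sigma(l') = \sigma^{M}(l)\cdot\sigma(l') = \theta\!\big(\sigma^{M-1}(l)\cdot l'\big) = \theta(0) = 0$, because $\sigma^{M-1}(l)\in\mathcal{C}$ and $l'\in\mathcal{C}^\perp$. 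As $l$ ranges over all of $\mathcal{C}$, this shows $\sigma(l')\in\mathcal{C}^\perp$, so $\mathcal{C}^\perp$ is $\sigma$-invariant and therefore an $\mathbb{F}_q\mathcal{R}$-skew cyclic code.

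I expect the main obstacle to be the bookkeeping in the key identity, namely tracking the cyclic reindexing in both blocks simultaneously and the interplay of the unit-like factor $u$ with $\Theta$ and $\theta$, together with confirming that the $u$-weight on the $\mathbb{F}_q$-block does not spoil the adjoint symmetry used for the submodule step. The hypothesis $|\langle\theta\rangle|\mid\beta$ is what guarantees that $\mathcal{R}[x;\theta]/\langle x^\beta-1\rangle$ is a genuine ring, keeping the ambient polynomial framework and the identification of codewords with polynomials consistent; the shift computation itself only requires that $\sigma$ have finite order, which holds automatically.
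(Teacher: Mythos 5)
Your proof is correct, and it follows the same basic strategy as the paper: reduce $l\cdot\sigma(l')$ to an inner product of the form $\sigma^{k}(l)\cdot l'$ with $\sigma^{k+1}=\mathrm{id}$ and transport the vanishing through the automorphism. The differences are in the packaging, and they work in your favour. First, you isolate the identity $\sigma(l)\cdot\sigma(l')=\theta(l\cdot l')$ and apply $\theta$ to the \emph{whole} inner product; the paper instead splits $\sigma^{\mathcal{L}-1}(l)\cdot l'=u\,T_1+T_2=0$ (with $\mathcal{L}=\mathrm{lcm}(\alpha,\beta)$) into the two separate conclusions $T_1=0$ and $T_2=0$ "by comparing coefficients," which as stated is not justified because $u$ is a zero divisor in $\mathcal{R}$ (one only gets that the $\xi_1$-component of $T_2$ vanishes and that the $\xi_2$-components cancel); your formulation sidesteps this entirely since $\theta(u)=u$ and $\theta|_{\mathbb{F}_q}=\Theta$ let you push $\theta$ through $u\,T_1+T_2$ in one stroke. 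Second, you take $M$ to be the order of $\sigma$ as a permutation of the finite ambient set rather than $\mathrm{lcm}(\alpha,\beta)$, which makes explicit your (correct) observation that the shift-invariance of $\mathcal{C}^\perp$ does not actually require the hypothesis $|\langle\theta\rangle|\mid\beta$; the paper uses that hypothesis to guarantee $\theta^{\mathcal{L}}=\Theta^{\mathcal{L}}=\mathrm{id}$. Finally, you verify the $\mathcal{R}$-submodule property of $\mathcal{C}^\perp$ via the adjoint symmetry $(s*l)\cdot l'=l\cdot(s*l')$, a step the paper omits altogether. In short: same route, but your version is tighter and closes a small gap in the published argument.
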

\begin{proof}
Let $\mathcal{C}$ be an $\mathbb{F}_q\mathcal{R}$-skew cyclic code of length $n=(\alpha,\beta)$ and $l'=(x_0',x_1',\dots,x_{\alpha-1}',y_0',y_1',\\\dots,y_{\beta-1}')\in \mathcal{C}^\perp$. Assume that $l=(x_0,x_1,\dots,x_{\alpha-1},y_0,y_1,\dots,y_{\beta-1})\in \mathcal{C}$ and $lcm(\alpha,\beta)=\mathcal{L}$. Now, we have to show that $\sigma(l')\in\mathcal{C}^\perp$. Consider the inner product of $l$ and $\sigma(l')$. We have
\begin{align*}
l\cdot\sigma(l')=&(x_0,x_1,\dots,x_{\alpha-1},y_0,y_1,\dots,y_{\beta-1})\cdot\\& (\Theta(x_{\alpha-1}'),\Theta(x_0'),\Theta(x_1'),\dots,\Theta(x_{\alpha-2}'),\theta(y_{\beta-1}'),\theta(y_0'),\theta(y_1'),\dots,\theta(y_{\beta-2}'))\\
=&u(x_0\Theta(x_{\alpha-1}')+x_1\Theta(x_0')+\cdots+x_{\alpha-1}\Theta(x_{\alpha-2}'))\\&+(y_0\theta(y_{\beta-1}')+y_1\theta(y_0')+\cdots+y_{\beta-1}\theta(y_{\beta-2}')).
\end{align*}
We only need to show that $x_0\Theta(x_{\alpha-1}')+x_1\Theta(x_0')+\cdots+x_{\alpha-1}\Theta(x_{\alpha-2}')=0$ and $y_0\theta(y_{\beta-1}')+y_1\theta(y_0')+\cdots+y_{\beta-1}\theta(y_{\beta-2}')=0$. As $\mathcal{C}$ is an $\mathbb{F}_q\mathcal{R}$-skew cyclic code, $\sigma^{\mathcal{L}-1}(l)\in \mathcal{C}$ where $\sigma^{\mathcal{L}-1}(l)=(\Theta^{\mathcal{L}-1}(x_1),\Theta^{\mathcal{L}-1}(x_2),\dots,\Theta^{\mathcal{L}-1}(x_{\alpha-1}),\Theta^{\mathcal{L}-1}(x_0),\theta^{\mathcal{L}-1}(y_1),\theta^{\mathcal{L}-1}(y_2),\dots,\\\theta^{\mathcal{L}-1}(y_{\beta-1}),\theta^{\mathcal{L}-1}(y_0))$. Now, we get that $\sigma^{\mathcal{L}-1}(l)\cdot l'=0$, where
\begin{align*}
\sigma^{\mathcal{L}-1}(l)\cdot l'&=u\sum_{i=0}^{\alpha -1}\Theta^{\mathcal{L}-1}(x_{i+1})x_{i}'+\sum_{j=0}^{\beta -1}\theta^{\mathcal{L}-1}(y_{j+1})y_{j}'.
\end{align*}
Comparing the coefficients on both sides, we get
$$\Theta^{\mathcal{L}-1}(x_0)x_{\alpha-1}'+\Theta^{\mathcal{L}-1}(x_1)x_0'+\cdots+\Theta^{\mathcal{L}-1}(x_{\alpha-1})x_{\alpha-2}'=0$$ and
$$\theta^{\mathcal{L}-1}(y_0)y_{\beta-1}'+\theta^{\mathcal{L}-1}(y_1)y_0'+\cdots+\theta^{\mathcal{L}-1}(y_{\beta-1})y_{\beta-2}'=0.$$ Since the order of $\theta$ divides $\beta$, $|\langle \theta\rangle|$ divide $lcm(\alpha,\beta)=\mathcal{L}$ and hence $\theta^{\mathcal{L}}(a)=\Theta^{\mathcal{L}}(a)=a$ for any $a\in \mathbb{F}_q$.  Then
$\theta(\theta^{\mathcal{L}-1}(y_0)y_{\beta-1}'+\theta^{\mathcal{L}-1}(y_1)y_0'+\cdots+\theta^{\mathcal{L}-1}(y_{\beta-1})y_{\beta-2}')=\theta(0)=0$ and $\Theta(\Theta^{\mathcal{L}-1}(x_0)x_{\alpha-1}'+\Theta^{\mathcal{L}-1}(x_1)x_0'+\cdots+\Theta^{\mathcal{L}-1}(x_{\alpha-1})x_{\alpha-2}')=\Theta(0)=0$. This implies $y_0\theta(y_{\beta-1}')+y_1\theta(y_0')+\cdots+y_{\beta-1}\theta(y_{\beta-2}')=0$ and $x_0\Theta(x_{\alpha-1}')+x_1\Theta(x_0')+\cdots+x_{\alpha-1}\Theta(x_{\alpha-2}')=0$. Therefore, $l\cdot\sigma(l')=0$ and hence $\sigma(l')\in \mathcal{C}^\perp$. Thus, $\mathcal{C}^\perp$ is an $\mathbb{F}_q\mathcal{R}$-skew cyclic code of length $(\alpha,\beta)$.\qed
\end{proof}
\begin{theorem}
A code $\mathcal{C}$ is an $\mathbb{F}_q\mathcal{R}$-skew cyclic code of length $(\alpha,\beta)$ if and only if $\mathcal{C}$
is a left $\mathcal{R}[x;\theta]$-submodule of $R_{\alpha,\beta}$.
\end{theorem}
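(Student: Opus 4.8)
The plan is to show that the algebraic action of the indeterminate $x$ on a polynomial pair $c(x)=(k(x),t(x))\in R_{\alpha,\beta}$ realizes exactly the skew cyclic shift $\sigma$. Concretely, I would first compute $x*c(x)$ using the module multiplication, namely $x*c(x)=(\eta(x)k(x),\,x\,t(x))=(x\,k(x),\,x\,t(x))$, since $\eta$ fixes the coefficient $1$. In the first component, multiplying $k(x)=k_0+k_1x+\cdots+k_{\alpha-1}x^{\alpha-1}$ by $x$ in $\mathbb{F}_q[x;\Theta]$ via the rule $(ax^i)(bx^j)=a\Theta^i(b)x^{i+j}$ gives $\Theta(k_0)x+\cdots+\Theta(k_{\alpha-1})x^\alpha$, and reducing modulo $x^\alpha-1$ sends $\Theta(k_{\alpha-1})x^\alpha$ to the constant term $\Theta(k_{\alpha-1})$. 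The same calculation in the second component over $\mathcal{R}[x;\theta]$ modulo $x^\beta-1$ yields the analogous shift with $\theta$ in place of $\Theta$. Comparing the resulting coefficient vectors with the definition of $\sigma$ shows $x*c(x)=\sigma(c)$. This single identification is the crux of the argument, and it is the step I expect to require the most care, since one must track the twist $\Theta$ (resp.\ $\theta$) on every coefficient and correctly perform the wrap-around reduction in both factors simultaneously.

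Granting this identity, the forward implication is straightforward. If $\mathcal{C}$ is an $\mathbb{F}_q\mathcal{R}$-skew cyclic code, then by definition $\sigma(c)=x*c(x)\in\mathcal{C}$ for every $c\in\mathcal{C}$; iterating, $x^i*c(x)\in\mathcal{C}$ for all $i\ge 0$. Because $\mathcal{C}$ is an $\mathcal{R}$-submodule, it is closed under the scalar action of each $r_i\in\mathcal{R}$ and under addition. Writing an arbitrary $r(x)=\sum_i r_ix^i\in\mathcal{R}[x;\theta]$ and using the module associativity $(r_ix^i)*c(x)=r_i*(x^i*c(x))$ together with additivity of $*$, I would conclude $r(x)*c(x)=\sum_i r_i*(x^i*c(x))\in\mathcal{C}$. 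Hence $\mathcal{C}$ is a left $\mathcal{R}[x;\theta]$-submodule of $R_{\alpha,\beta}$.

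For the converse, suppose $\mathcal{C}$ is a left $\mathcal{R}[x;\theta]$-submodule. Since the constant polynomials form a copy of $\mathcal{R}$ inside $\mathcal{R}[x;\theta]$, the submodule $\mathcal{C}$ is in particular closed under the $\mathcal{R}$-scalar multiplication and under addition, so it is an $\mathbb{F}_q\mathcal{R}$-linear code. Applying closure to the element $x\in\mathcal{R}[x;\theta]$ gives $x*c(x)\in\mathcal{C}$ for every $c\in\mathcal{C}$, and by the identity established above this is precisely $\sigma(c)\in\mathcal{C}$. Therefore $\mathcal{C}$ is invariant under the skew cyclic shift, i.e.\ it is an $\mathbb{F}_q\mathcal{R}$-skew cyclic code, completing both directions.
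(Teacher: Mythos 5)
Your proposal is correct and follows essentially the same route as the paper: both proofs reduce the statement to the single computation identifying $x*c(x)$ with the skew cyclic shift $\sigma(c)$, then use $\mathcal{R}$-linearity and iteration of $x$ to get closure under all of $\mathcal{R}[x;\theta]$, with the converse following by specializing the submodule property to the element $x$. Your write-up is in fact slightly more explicit than the paper's (which compresses the extension from $x$ to general $s(x)$ into the phrase ``by linearity''), but the argument is the same.
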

\begin{proof}
Let $\mathcal{C}$ be an $\mathbb{F}_q\mathcal{R}$-skew cyclic code of length $(\alpha,\beta)$ and $c=(k_0,k_1,\dots,k_{\alpha-1},t_0,t_1,\dots,\\t_{\beta-1})\in \mathcal{C}$; and the corresponding element of $c$ in $R_{\alpha,\beta}$ be $c(x)=(k(x),t(x))$. Now, consider
\begin{align*}
x*c(x)&=(\Theta(k_{\alpha-1})+\Theta(k_0)x+\cdots+\Theta(k_{\alpha-2})x^{\alpha-1}, \theta(t_{\beta-1})+\theta(t_0)x+\cdots+\theta(t_{\beta-2})x^{\beta-1})
\end{align*} which corresponds to the skew cyclic shift $(\Theta(k_{\alpha-1}),\Theta(k_1),\dots,\Theta(k_{\alpha-2}),\theta(t_{\beta-1}),\theta(t_1),\dots,\\\theta(t_{\beta-2}))$ of $(k_0,k_1,\dots,k_{\alpha-1},t_0,t_1,\dots,t_{\beta-1})$. Therefore, $x*c(x)\in \mathcal{C}$. Then by linearity of $\mathcal{C}$, $s(x)*c(x)\in \mathcal{C}$ for any $s(x)\in \mathcal{R}[x;\theta]$ and hence $\mathcal{C}$
is a left $\mathcal{R}[x;\theta]$-submodule of $R_{\alpha,\beta}$. \\
The converse part follows from the definition.\qed
\end{proof}
Let $\mathcal{C}$ be an $\mathbb{F}_q\mathcal{R}$-skew cyclic code of length $(\alpha,\beta)$. Let $c(x)=(k(x),t(x))\in \mathcal{C}$ and $m(x)\in \frac{\mathbb{F}_q[x;\Theta]}{\langle x^\alpha-1\rangle}$. Now, we consider
$$M=\{t(x)\in \mathcal{R}[x;\theta]/\langle x^\beta-1\rangle ~|~(m(x),t(x))\in \mathcal{C}\}$$ and $$N=\{k(x)\in \mathbb{F}_q[x;\Theta]/\langle x^\alpha-1\rangle ~|~(k(x),0)\in \mathcal{C}\}.$$ Next, we present the algebraic properties of these two sets.
\begin{lemma}\label{lem1}
The above defined set $N$ is a left $\mathbb{F}_q[x;\Theta]$-submodule of $\mathbb{F}_q[x;\Theta]/\langle x^\alpha-1\rangle$ generated by a right divisor of $x^\alpha-1$.
\end{lemma}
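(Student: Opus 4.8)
The plan is to prove the statement in two stages: first that $N$ is a left $\mathbb{F}_q[x;\Theta]$-submodule, and then that it is principal with a generator right-dividing $x^\alpha-1$.

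For the submodule part, I would use the hypothesis that $\mathcal{C}$, being an $\mathbb{F}_q\mathcal{R}$-skew cyclic code, is a left $\mathcal{R}[x;\theta]$-submodule of $R_{\alpha,\beta}$ under the action $*$. Closure of $N$ under addition is immediate from the linearity of $\mathcal{C}$, since $(k_1(x),0),(k_2(x),0)\in\mathcal{C}$ gives $(k_1(x)+k_2(x),0)\in\mathcal{C}$. For closure under left multiplication, the key observation is that $\eta:\mathcal{R}\to\mathbb{F}_q$ restricts to the identity on the natural copy $\mathbb{F}_q\hookrightarrow\mathcal{R}$, $a\mapsto a$, and intertwines the automorphisms, $\eta\circ\theta=\Theta\circ\eta$, so that the first coordinate of the $*$-action is genuinely the skew product in $\mathbb{F}_q[x;\Theta]/\langle x^\alpha-1\rangle$. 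Hence for any $f(x)\in\mathbb{F}_q[x;\Theta]$, regarding its coefficients inside $\mathcal{R}$ produces $r(x)\in\mathcal{R}[x;\theta]$ with $\eta(r(x))=f(x)$, and then $r(x)*(k(x),0)=(\eta(r(x))k(x),\,r(x)\cdot 0)=(f(x)k(x),0)\in\mathcal{C}$. This shows $f(x)k(x)\in N$, so $N$ is closed under left multiplication by all of $\mathbb{F}_q[x;\Theta]$.

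For the generator part, I would lift $N$ to $\mathbb{F}_q[x;\Theta]$ by setting $\widetilde{N}=\pi^{-1}(N)$, where $\pi:\mathbb{F}_q[x;\Theta]\to\mathbb{F}_q[x;\Theta]/\langle x^\alpha-1\rangle$ is the canonical projection. Since $\pi$ is a left-module homomorphism and $N$ is a submodule, $\widetilde{N}$ is a left ideal of $\mathbb{F}_q[x;\Theta]$, and it contains $x^\alpha-1$ because $\pi(x^\alpha-1)=0\in N$. Now I would invoke the right-division algorithm in $\mathbb{F}_q[x;\Theta]$: choosing $g(x)$ to be a monic polynomial of least degree in $\widetilde{N}\setminus\{0\}$, any $f(x)\in\widetilde{N}$ can be written $f(x)=q(x)g(x)+s(x)$ with $s(x)=0$ or $\deg s<\deg g$; since $q(x)g(x)\in\widetilde{N}$, minimality forces $s(x)=0$, giving $\widetilde{N}=\mathbb{F}_q[x;\Theta]\,g(x)$. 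Applying this to $x^\alpha-1\in\widetilde{N}$ yields $x^\alpha-1=h(x)g(x)$, i.e.\ $g(x)$ is a right divisor of $x^\alpha-1$, and $N=\langle g(x)\rangle$ is generated by the image of $g(x)$.

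The most delicate point is the submodule step, specifically verifying that the first-component action of $\mathbb{F}_q[x;\Theta]$ on $\mathcal{C}$ is genuinely realized through the $\mathcal{R}[x;\theta]$-action; this hinges on $\eta$ being surjective (restricting to the identity on $\mathbb{F}_q$) and intertwining $\theta$ with $\Theta$, so that $\eta(r(x))k(x)$ is exactly the skew product in $\mathbb{F}_q[x;\Theta]/\langle x^\alpha-1\rangle$. The generator step is then routine, relying only on the right Euclidean structure of $\mathbb{F}_q[x;\Theta]$, which holds for every length $\alpha$ and does not require $|\langle\theta\rangle|$ to divide $\alpha$. I should also record the degenerate case $N=\{0\}$, where $\widetilde{N}=\langle x^\alpha-1\rangle$ and one simply takes $g(x)=x^\alpha-1$, a right divisor of itself.
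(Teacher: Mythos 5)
Your proposal is correct and takes essentially the same route as the paper: closure under addition follows from linearity of $\mathcal{C}$, and closure under left multiplication follows from the $*$-action of $\mathcal{R}[x;\theta]$ on $\mathcal{C}$, with your explicit use of the lift $r(x)$ satisfying $\eta(r(x))=f(x)$ merely making precise what the paper does implicitly when it multiplies by $t'(x)\in\mathbb{F}_q[x;\Theta]/\langle x^\alpha-1\rangle$. The only difference is that you supply the standard right-division-algorithm argument for the existence of a monic generator that right-divides $x^\alpha-1$ (including the degenerate case $N=\{0\}$), a step the paper simply asserts; this is a completion of the same argument rather than a different method.
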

\begin{proof}
Let $k_1(x)$ and $k_2(x)$ be two elements of $N$. Then from definition, $(k_1(x),0)\in \mathcal{C}$ and $(k_2(x),0)\in \mathcal{C}.$ Further, $(k_1(x),0)+(k_2(x),0)=(k_1(x)+k_2(x),0)\in \mathcal{C}$. This implies that $k_1(x)+k_2(x)\in N$. Again, let $t'(x)\in\mathbb{F}_q[x;\Theta]/\langle x^\alpha-1\rangle$ and $k(x)\in N$. Then $(k(x),0)\in  \mathcal{C}$. As $\mathcal{C}$ is a left $\mathcal{R}[x;\theta]$-submodule, we have
$$t'(x)*(k(x),0)=(t'(x)k(x),0)\in \mathcal{C}.$$ This implies $t'(x)k(x)\in N$ where $t'(x)k(x)$ is taken modulo $x^{\alpha}-1$. Hence, $N$ is a left $\mathbb{F}_q[x;\Theta]$-submodule of $\mathbb{F}_q[x;\Theta]/\langle x^\alpha-1\rangle$ generated by a right divisor $f(x)$ of $x^\alpha-1$.\qed
\end{proof}
\begin{lemma} \label{lem2}
The set $M$ is a left $\mathcal{R}[x;\theta]$-submodule of $\mathcal{R}[x;\theta]/\langle x^\beta-1\rangle$ which is generated by a single element.
\end{lemma}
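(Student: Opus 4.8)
The plan is to prove the two assertions in turn: that $M$ is closed under the module operations, which is a routine check modeled on the verification for $N$ in Lemma \ref{lem1}, and that it is principally generated, which is the substantive point. For the module structure I would take $t_1(x),t_2(x)\in M$, so by definition there exist $m_1(x),m_2(x)\in\mathbb{F}_q[x;\Theta]/\langle x^\alpha-1\rangle$ with $(m_1(x),t_1(x)),(m_2(x),t_2(x))\in\mathcal{C}$; since $\mathcal{C}$ is an $\mathcal{R}$-submodule of $R_{\alpha,\beta}$, adding these codewords gives $(m_1(x)+m_2(x),t_1(x)+t_2(x))\in\mathcal{C}$, whence $t_1(x)+t_2(x)\in M$. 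For closure under the skew action, given $t(x)\in M$ with $(m(x),t(x))\in\mathcal{C}$ and any $r(x)\in\mathcal{R}[x;\theta]$, the identity $r(x)*(m(x),t(x))=(\eta(r(x))m(x),\,r(x)t(x))$ together with the fact that $\mathcal{C}$ is a left $\mathcal{R}[x;\theta]$-submodule of $R_{\alpha,\beta}$ yields $r(x)t(x)\in M$. Thus $M$ is a left $\mathcal{R}[x;\theta]$-submodule of $\mathcal{R}[x;\theta]/\langle x^\beta-1\rangle$, equivalently a skew cyclic code of length $\beta$ over $\mathcal{R}$.

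For the single-generator claim the obstacle is that $\mathcal{R}$ is not a field, so $\mathcal{R}[x;\theta]$ is not a left principal ideal domain and one cannot invoke a skew division algorithm directly on $M$. The remedy is to pass through the idempotent decomposition $\mathcal{R}=\xi_1\mathcal{R}\oplus\xi_2\mathcal{R}$ with $\xi_1=1-u$, $\xi_2=u$, which is compatible with the skew structure because $\theta$ fixes $u$ and hence fixes each $\xi_i$. Writing every $t(x)\in M$ as $t(x)=\xi_1 t^{(1)}(x)+\xi_2 t^{(2)}(x)$ and setting $M_i=\{t^{(i)}(x)\mid t(x)\in M\}$, one obtains $M=\xi_1 M_1\oplus\xi_2 M_2$, with each $M_i$ a left $\mathbb{F}_q[x;\theta]$-submodule of $\mathbb{F}_q[x;\theta]/\langle x^\beta-1\rangle$, i.e.\ a skew cyclic code of length $\beta$ over $\mathbb{F}_q$. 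Since $\mathbb{F}_q[x;\theta]$ is a left Euclidean domain, the argument used for $N$ in Lemma \ref{lem1} applies verbatim to each component and gives $M_i=\langle g_i(x)\rangle$ for some right divisor $g_i(x)$ of $x^\beta-1$.

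It remains to reassemble the two components. By Theorem \ref{th2}, a skew cyclic code over $\mathcal{R}$ of the form $\xi_1 M_1\oplus\xi_2 M_2$ whose components have generator polynomials $g_1(x)$ and $g_2(x)$ is generated by the single element $\xi_1 g_1(x)+\xi_2 g_2(x)$. Applying this with $\mathcal{C}_i=M_i$ yields $M=\langle \xi_1 g_1(x)+\xi_2 g_2(x)\rangle$, which establishes the claim. The crux is precisely this reduction-and-reassembly step: principal generation fails to follow from any Euclidean structure on $\mathcal{R}[x;\theta]$ directly, so the proof hinges on splitting $M$ by the orthogonal idempotents $\xi_1,\xi_2$ into field-theoretic pieces and then recovering a single $\mathcal{R}$-generator through Theorem \ref{th2}.
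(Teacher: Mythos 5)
Your proof is correct and follows essentially the same route as the paper: the same closure verification for the module structure, followed by an appeal to Theorem \ref{th2} to obtain the single generator $\xi_1 g_1(x)+\xi_2 g_2(x)$. The only difference is that you unpack the idempotent decomposition $M=\xi_1 M_1\oplus\xi_2 M_2$ that Theorem \ref{th2} encapsulates, whereas the paper cites that theorem directly.
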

\begin{proof}
Let $t_1(x)$ and $t_2(x)$ be two elements of $M$. Then from definition there exist $m_1(x)$ and $m_2(x)$ in $\mathbb{F}_q[x;\Theta]/\langle x^\alpha-1\rangle$ such that $(m_1(x),t_1(x))\in \mathcal{C}$, $(m_2(x),t_2(x))\in \mathcal{C}$. Thus, $$(m_1(x),t_1(x))+(m_2(x),t_2(x))=(m_1(x)+m_2(x),t_1(x)+t_2(x))\in \mathcal{C}.$$ This implies that $t_1(x)+t_2(x)\in M$. Let $s(x)\in \mathcal{R}[x;\theta]/\langle x^\beta-1\rangle$ and $(m(x),t(x))\in \mathcal{C}$. Since $\mathcal{C}$ is a left $\mathcal{R}[x;\theta]$-submodule of $R_{\alpha,\beta}$, we have
$$s(x)*(m(x),t(x))=(\eta(s(x))m(x),s(x)t(x))\in \mathcal{C}.$$ This implies $s(x)t(x)\in M$, where $s(x)t(x)$ is taken modulo $x^{\beta}-1$ and $\eta(s(x))m(x)$ is taken modulo $x^{\alpha}-1$. Therefore, $M$ is a left $\mathcal{R}[x;\theta]$-submodule of $\mathcal{R}[x;\theta]/\langle x^\beta-1\rangle$. Further, from Theorem \ref{th2}, $M=\langle
g(x)\rangle$ where $g(x):=
 \xi_1g_1(x)+\xi_2g_2(x)$.\qed
\end{proof}
\begin{theorem}
Let $\mathcal{C}$ be an $\mathbb{F}_q\mathcal{R}$-skew cyclic code of length $(\alpha,\beta)$. Then $\mathcal{C}$ is generated as a left $\mathcal{R}[x;\theta]$-submodule of $R_{\alpha,\beta}$ by $(f(x),0)$ and $(m(x),g(x))$ where $m(x)\in \mathbb{F}_q[x;\Theta]/\langle x^\alpha-1\rangle$, $g(x)=\xi_1g_1(x)+\xi_2g_2(x)$ and $f(x)$ is a right divisor of $x^{\alpha}-1$.
\end{theorem}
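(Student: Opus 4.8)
The plan is to build the two generators directly from the auxiliary modules $M$ and $N$ and then show that every codeword decomposes along them. First I would introduce the projection onto the second coordinate, the map $\pi\colon\mathcal{C}\to\mathcal{R}[x;\theta]/\langle x^\beta-1\rangle$ given by $\pi(k(x),t(x))=t(x)$; this is a homomorphism of left $\mathcal{R}[x;\theta]$-modules whose image is exactly $M$ and whose kernel is the set $\{(k(x),0)\in\mathcal{C}\}$, i.e.\ the copy of $N$ sitting inside $R_{\alpha,\beta}$. By Lemma \ref{lem2} we have $M=\langle g(x)\rangle$ with $g(x)=\xi_1g_1(x)+\xi_2g_2(x)$, so $g(x)\in M$ guarantees the existence of some $m(x)\in\mathbb{F}_q[x;\Theta]/\langle x^\alpha-1\rangle$ with $(m(x),g(x))\in\mathcal{C}$. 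By Lemma \ref{lem1} we have $N=\langle f(x)\rangle$ with $f(x)$ a right divisor of $x^\alpha-1$, so $(f(x),0)\in\mathcal{C}$. Thus both proposed generators genuinely lie in $\mathcal{C}$.

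The core step is to reduce an arbitrary codeword to these two generators. Given $(a(x),b(x))\in\mathcal{C}$, its second coordinate satisfies $b(x)\in M$, so $b(x)=r(x)g(x)$ for some $r(x)\in\mathcal{R}[x;\theta]$ (taken modulo $x^\beta-1$). Forming $r(x)*(m(x),g(x))=(\eta(r(x))m(x),r(x)g(x))\in\mathcal{C}$ and subtracting, I obtain
\[
(a(x),b(x))-r(x)*(m(x),g(x))=(a(x)-\eta(r(x))m(x),0)\in\mathcal{C}.
\]
Hence the first coordinate lies in $N=\langle f(x)\rangle$, so $a(x)-\eta(r(x))m(x)=s(x)f(x)$ for some $s(x)\in\mathbb{F}_q[x;\Theta]$ (modulo $x^\alpha-1$), which rewrites as $s(x)*(f(x),0)$. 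Reassembling gives
\[
(a(x),b(x))=r(x)*(m(x),g(x))+s(x)*(f(x),0),
\]
so that $\mathcal{C}=\langle (f(x),0),(m(x),g(x))\rangle$ as a left $\mathcal{R}[x;\theta]$-module.

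The delicate point, and where I would spend the most care, is the twisting of the module action by $\eta$ on the first coordinate. When I write $s(x)*(f(x),0)$ I must check that the scalar coming from $N$ can be taken in $\mathbb{F}_q[x;\Theta]$ and that $\eta(s(x))=s(x)$ there; this holds because $\eta$ restricts to the identity on $\mathbb{F}_q$, so the embedding $\mathbb{F}_q[x;\Theta]\hookrightarrow\mathcal{R}[x;\theta]$ is compatible with $*$. I would also verify that the two reductions, modulo $x^\alpha-1$ on the first component and modulo $x^\beta-1$ on the second, are respected by the action, and record that $f(x)$ is a right divisor of $x^\alpha-1$ and $g(x)=\xi_1g_1(x)+\xi_2g_2(x)$ exactly as supplied by Lemma \ref{lem1} and Theorem \ref{th2}. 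Apart from this bookkeeping, the argument amounts to a straightforward splitting of the sequence $0\to N\to\mathcal{C}\xrightarrow{\pi}M\to0$.
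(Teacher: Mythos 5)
Your proposal is correct and follows essentially the same route as the paper's proof: reduce the second coordinate via $M=\langle g(x)\rangle$, subtract $r(x)*(m(x),g(x))$ to land in $N=\langle f(x)\rangle$, and reassemble. The only cosmetic difference is that the paper expresses the subtraction by adding $(p-1)$ copies of $(\eta(h(x))m(x),0)$, which is the same operation in characteristic $p$; your explicit check that $\eta$ acts as the identity on scalars from $\mathbb{F}_q[x;\Theta]$ is a point the paper leaves implicit.
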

\begin{proof}
Let $c=(c_1,c_2)\in \mathcal{C}$ with $c_1\in \mathbb{F}_q^{\alpha}$ and $c_2\in \mathcal{R}^{\beta}$. It has the polynomial representation as $c(x)=(c_1(x),c_2(x))$ in $R_{\alpha,\beta}$. Then $c_2(x)\in M$ and from Lemma \ref{lem2}, $c_2(x)=h(x)g(x)$ for some $h(x)\in \mathcal{R}[x;\theta]/\langle x^\beta-1\rangle$. As $g(x)\in M$, there exist $m(x)\in \mathbb{F}_q[x;\Theta]/\langle x^\alpha-1\rangle$ such that $(m(x),g(x))\in \mathcal{C}$. Now,
\begin{align*}
c(x)&=(c_1(x),c_2(x))\\
&=(c_1(x),0)+(0,h(x)g(x))\\
&=(c_1(x),0)+p(\eta(h(x)m(x)),0)+(0,h(x)g(x))\\
&=(c_1(x),0)+(p-1)(\eta(h(x))m(x),0)+(\eta(h(x)m(x)),0)+(0,h(x)g(x))\\
&=(c_1(x),0)+(p-1)(\eta(h(x))m(x),0)+(\eta(h(x))m(x),h(x)g(x))\\
&=(c_1(x)+(p-1)\eta(h(x))m(x),0)+(\eta(h(x))m(x),h(x)g(x)).
\end{align*} where $p$ is the characteristic of $\mathbb{F}_q$.
This gives $(c_1(x)+(p-1)\eta(h(x))m(x),0)\in \mathcal{C}$ and hence $(c_1(x)+(p-1)\eta(h(x))m(x),0)\in N$. Moreover, by Lemma \ref{lem1} there exists $d(x)\in N$ such that $c_1(x)+(p-1)\eta(h(x))m(x)=d(x)f(x)$. Therefore, $c(x)=(\eta(h(x))m(x),h(x)g(x))+(d(x)f(x),0)=h(x)*(m(x),g(x))+d(x)*(f(x),0)$.\qed
\end{proof}
\begin{theorem}
Let $\mathcal{C}=\langle(f(x),0),(m(x),g(x))  \rangle$ be an $\mathbb{F}_q\mathcal{R}$-skew cyclic code of length $(\alpha,\beta)$ with $m(x)=0$. Then $\mathcal{C}=\mathcal{C}'\otimes \mathscr{C}$ where $\mathcal{C}'$ is a skew cyclic code of length $\alpha$ over $\mathbb{F}_q$ and $\mathscr{C}$ is a skew cyclic code of length $\beta$ over $\mathcal{R}$.
\end{theorem}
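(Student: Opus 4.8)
The plan is to establish the two inclusions $\mathcal{C}\subseteq\mathcal{C}'\otimes\mathscr{C}$ and $\mathcal{C}'\otimes\mathscr{C}\subseteq\mathcal{C}$ separately, after first pinning down the two factor codes explicitly. Setting $m(x)=0$ turns the generating set into $\mathcal{C}=\langle(f(x),0),(0,g(x))\rangle$. I would take $\mathcal{C}'=\langle f(x)\rangle$ as a left $\mathbb{F}_q[x;\Theta]$-submodule of $\mathbb{F}_q[x;\Theta]/\langle x^\alpha-1\rangle$; since $f(x)$ is a right divisor of $x^\alpha-1$, this is a skew cyclic code of length $\alpha$ over $\mathbb{F}_q$. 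Likewise I would set $\mathscr{C}=\langle g(x)\rangle$ with $g(x)=\xi_1g_1(x)+\xi_2g_2(x)$, which by Theorem~\ref{th2} is a skew cyclic code of length $\beta$ over $\mathcal{R}$. The whole statement then reduces to checking that these two factors reassemble exactly into $\mathcal{C}$ under the operation $\otimes$.

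For the forward inclusion, a general codeword of $\mathcal{C}$ is $r_1(x)*(f(x),0)+r_2(x)*(0,g(x))$ with $r_1(x),r_2(x)\in\mathcal{R}[x;\theta]$. Applying the definition of the multiplication, $r(x)*(k(x),t(x))=(\eta(r(x))k(x),r(x)t(x))$, each summand has one trivial coordinate, so the codeword equals $(\eta(r_1(x))f(x),\,r_2(x)g(x))$. The first coordinate lies in $\mathcal{C}'$ because $\eta(r_1(x))\in\mathbb{F}_q[x;\Theta]$ and $\mathcal{C}'$ is a left $\mathbb{F}_q[x;\Theta]$-module, while the second lies in $\mathscr{C}=\langle g(x)\rangle$; hence $\mathcal{C}\subseteq\mathcal{C}'\otimes\mathscr{C}$.

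For the reverse inclusion, given $(a(x),b(x))$ with $a(x)\in\mathcal{C}'$ and $b(x)\in\mathscr{C}$, I would write $a(x)=d(x)f(x)$ and $b(x)=h(x)g(x)$ for suitable $d(x)\in\mathbb{F}_q[x;\Theta]$ and $h(x)\in\mathcal{R}[x;\theta]$. Using the embedding $\mathbb{F}_q\hookrightarrow\mathcal{R}$, regard $d(x)$ as an element of $\mathcal{R}[x;\theta]$; then $d(x)*(f(x),0)=(\eta(d(x))f(x),0)=(a(x),0)$ and $h(x)*(0,g(x))=(0,h(x)g(x))=(0,b(x))$, whose sum is $(a(x),b(x))\in\mathcal{C}$. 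This yields $\mathcal{C}'\otimes\mathscr{C}\subseteq\mathcal{C}$, and combining the inclusions gives the claimed equality.

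The only delicate point, which I expect to be the main obstacle, is the bookkeeping around $\eta$ and the two automorphisms. One must record that viewing $\mathbb{F}_q$ inside $\mathcal{R}$ is compatible with the skew structures, namely that $\theta$ restricts to $\Theta$ on this copy of $\mathbb{F}_q$, and that $\eta$ restricts to the identity there, so that the lift $d(x)\in\mathcal{R}[x;\theta]$ satisfies $\eta(d(x))=d(x)$ and genuinely recovers $a(x)$ in the first coordinate. Once this compatibility is in place, both inclusions are immediate from direct computation with the multiplication $*$, and I anticipate no further difficulty.
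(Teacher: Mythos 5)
Your proposal is correct and follows essentially the same route as the paper: identify $\mathcal{C}'=\langle f(x)\rangle$ and $\mathscr{C}=\langle g(x)\rangle$ and verify that $\mathcal{C}$ decomposes as their direct product. The paper disposes of this in one line by citing Lemmas~\ref{lem1} and~\ref{lem2} for the projections of a codeword, whereas you verify both inclusions explicitly from the generators and the definition of $*$ (including the compatibility of $\eta$ and $\theta$ with the embedding $\mathbb{F}_q\hookrightarrow\mathcal{R}$, which the paper leaves implicit); this is a more careful write-up of the same argument, not a different one.
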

\begin{proof}
Let $c=(c_1,c_2)\in \mathcal{C}$ with $c_1\in \mathbb{F}_q^{\alpha}$ and $c_2\in \mathcal{R}^{\beta}$.
Then, from Lemma \ref{lem1} and Lemma \ref{lem2}, $c_1\in \mathcal{C}'=\langle f(x) \rangle$ and $c_2\in \mathscr{C}=\langle g(x) \rangle$. This implies $\mathcal{C}=\mathcal{C}'\otimes \mathscr{C}$ where $\mathcal{C}'=\langle f(x) \rangle$ and $\mathscr{C}=\langle g(x) \rangle$.\qed
\end{proof}
\begin{corollary}\label{cor1}
Let $\gcd(\alpha, |\langle \Theta\rangle|)=1$ and $\mathcal{C}=\langle(f(x),0),(0,g(x))  \rangle$ be an $\mathbb{F}_q\mathcal{R}$-skew cyclic code of length $(\alpha,\beta)$. Then $\mathcal{C}=\mathcal{C}'\otimes \mathscr{C}$ where $\mathcal{C}'$ is a cyclic code of length $\alpha$ over $\mathbb{F}_q$ and $\mathscr{C}$ is a skew cyclic code of length $\beta$ over $\mathcal{R}$.
\end{corollary}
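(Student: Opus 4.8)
The plan is to build directly on the preceding theorem and then strengthen its conclusion using the coprimality hypothesis. Since the generating set $\langle(f(x),0),(0,g(x))\rangle$ is exactly the case $m(x)=0$ of the preceding theorem, I would first invoke that result to obtain $\mathcal{C}=\mathcal{C}'\otimes\mathscr{C}$, where $\mathcal{C}'=\langle f(x)\rangle$ is a \emph{skew} cyclic code of length $\alpha$ over $\mathbb{F}_q$ and $\mathscr{C}=\langle g(x)\rangle$ is a skew cyclic code of length $\beta$ over $\mathcal{R}$. Thus the only thing left to prove is that, under $\gcd(\alpha,|\langle\Theta\rangle|)=1$, the skew cyclic code $\mathcal{C}'$ over $\mathbb{F}_q$ is in fact an ordinary cyclic code.

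To do this, let $\delta_\Theta$ denote the skew cyclic shift on $\mathbb{F}_q^\alpha$ and let $\tau$ denote the ordinary cyclic shift. I would write $\delta_\Theta=\Theta_{*}\circ\tau$, where $\Theta_{*}$ applies $\Theta$ coordinatewise; since $\Theta_*$ and $\tau$ commute, $\delta_\Theta^{k}=\Theta_*^{k}\tau^{k}$ for every $k$. Here $\tau$ has order $\alpha$ and $\Theta$ has order $r:=|\langle\Theta\rangle|$, so the coprimality $\gcd(\alpha,r)=1$ lets me choose $s$ with $s\alpha\equiv 1\pmod r$. Then $\delta_\Theta^{s\alpha}=\Theta_*^{s\alpha}\tau^{s\alpha}=\Theta_*$, because $\tau^{s\alpha}=\mathrm{id}$ (as $\alpha\mid s\alpha$) and $\Theta_*^{s\alpha}=\Theta_*$ (as $s\alpha\equiv 1\pmod r$).

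The key step is the following closure argument. Because $\mathcal{C}'$ is invariant under $\delta_\Theta$, it is invariant under every nonnegative power of $\delta_\Theta$; in particular it is invariant under $\Theta_*=\delta_\Theta^{s\alpha}$ and hence under $\Theta_*^{-1}=\Theta_*^{r-1}=\delta_\Theta^{(r-1)s\alpha}$. Writing $\tau=\Theta_*^{-1}\circ\delta_\Theta$ now shows that $\mathcal{C}'$ is invariant under the ordinary cyclic shift $\tau$: for $c\in\mathcal{C}'$ we have $\delta_\Theta(c)\in\mathcal{C}'$ and therefore $\tau(c)=\Theta_*^{-1}(\delta_\Theta(c))\in\mathcal{C}'$. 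Consequently $\mathcal{C}'$ is an ordinary cyclic code of length $\alpha$ over $\mathbb{F}_q$, while $\mathscr{C}$ remains a skew cyclic code of length $\beta$ over $\mathcal{R}$, giving $\mathcal{C}=\mathcal{C}'\otimes\mathscr{C}$ as claimed.

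I expect the main obstacle to be the bookkeeping in the second paragraph, namely identifying the orders of $\tau$ and $\Theta$ and checking that the single congruence $s\alpha\equiv 1\pmod r$ simultaneously kills the shift part ($\tau^{s\alpha}=\mathrm{id}$) and reduces the Frobenius part to one application ($\Theta_*^{s\alpha}=\Theta_*$); once the identity $\delta_\Theta^{s\alpha}=\Theta_*$ is secured, the closure argument is purely formal. Alternatively, one may simply cite the result of Siap et al.\ \cite{Siap} that a skew cyclic code whose length is coprime to $|\langle\Theta\rangle|$ coincides with an ordinary cyclic code, and combine it with the preceding theorem; I would prefer to include the short self-contained argument above.
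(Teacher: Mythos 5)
Your proposal is correct. The paper's own ``proof'' of this corollary is the single word ``Straightforward,'' so there is nothing to compare line by line; what you have written is precisely the detail the authors are implicitly relying on, namely the standard fact (going back to Siap et al.) that a skew cyclic code whose length is coprime to $|\langle \Theta\rangle|$ is an ordinary cyclic code. Your reduction to the preceding theorem (the case $m(x)=0$) is the right first move, and the key identity $\delta_\Theta^{s\alpha}=\Theta_*^{s\alpha}\tau^{s\alpha}=\Theta_*$ checks out: $\Theta_*$ and $\tau$ do commute, $\tau^{s\alpha}=\mathrm{id}$, and $s\alpha\equiv 1\pmod{r}$ gives $\Theta_*^{s\alpha}=\Theta_*$, after which closure of $\mathcal{C}'$ under $\Theta_*^{-1}=\delta_\Theta^{(r-1)s\alpha}$ and hence under $\tau=\Theta_*^{-1}\circ\delta_\Theta$ is immediate. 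The only cosmetic caveat is that closure under $\tau$ gives cyclicity only because $\mathcal{C}'$ is already known to be $\mathbb{F}_q$-linear, which follows from Lemma~\ref{lem1}; you may want to say this explicitly, but it is not a gap.
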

\begin{proof}
Straightforward.\qed
\end{proof}
\begin{definition}
    Let $\mathcal{C}$ be an $\mathbb{F}_q\mathcal{R}$-linear code of length $(\alpha,\beta)$
 and $\mathcal{C}_\alpha$ (respectively, $\mathcal{C}_\beta$) be
the canonical projection of $\mathcal{C}_\alpha$  on the first $\alpha$ (respectively, on the last $\beta$) coordinates. The
code $\mathcal{C}$  is called separable if $\mathcal{C}$  is the direct product of $\mathcal{C}_\alpha$  and $\mathcal{C}_\beta$, i.e., $\mathcal{C}=\mathcal{C}_\alpha\otimes\mathcal{C}_\beta$.
\end{definition}
 Let $\mathcal{C}'$ be a skew cyclic code over $\mathbb{F}_q$, and $\mathscr{C}$ be a skew cyclic code over $\mathcal{R}$. If $\mathcal{C}$ is separable, then $\mathcal{C}=\mathcal{C}'\otimes \mathscr{C}$, i.e. $\mathcal{C}=\langle(f(x),0),(0,g(x))  \rangle$ where $\mathcal{C}'=\langle f(x) \rangle$ with $f(x)$ is a right divisor of $x^{\alpha}-1$ and $ \mathscr{C}=\langle g(x) \rangle$ with $g(x)$ is a right divisor of $x^{\beta}-1$.
\begin{theorem}\label{sep}
Let $\mathcal{C}=\mathcal{C}'\otimes \mathscr{C}$  be an $\mathbb{F}_q\mathcal{R}$-skew cyclic code of length $(\alpha,\beta)$ where $\mathcal{C}'$ and $\mathscr{C}$ are skew cyclic codes over $\mathbb{F}_q$ and $\mathcal{R}$, respectively and $\mathcal{C}=\langle(f(x),0),(0,g(x)) \rangle$. Then $\mathcal{C}'^{\perp}\subseteq\mathcal{C}'$ and $\mathscr{C}^{\perp}\subseteq\mathscr{C}$ if and only if $\mathcal{C}^{\perp}\subseteq\mathcal{C}$.
\end{theorem}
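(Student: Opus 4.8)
The plan is to reduce the statement to an elementary fact about containments of direct products of codes, once the dual $\mathcal{C}^\perp$ has been identified explicitly. Concretely, I would first prove the \emph{separation lemma} that $\mathcal{C}^\perp = \mathcal{C}'^{\perp} \otimes \mathscr{C}^{\perp}$, where $\mathcal{C}'^{\perp}$ is the Euclidean dual of $\mathcal{C}'$ in $\mathbb{F}_q^\alpha$ and $\mathscr{C}^{\perp}$ is the Euclidean dual of $\mathscr{C}$ in $\mathcal{R}^\beta$. Granting this, the theorem becomes the purely formal assertion that $\mathcal{C}'^{\perp} \otimes \mathscr{C}^{\perp} \subseteq \mathcal{C}' \otimes \mathscr{C}$ if and only if $\mathcal{C}'^{\perp} \subseteq \mathcal{C}'$ and $\mathscr{C}^{\perp} \subseteq \mathscr{C}$.

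To establish the separation lemma, take $l' = (x', y') \in \mathbb{F}_q^\alpha \times \mathcal{R}^\beta$ and exploit separability of $\mathcal{C}$: since $(x, 0) \in \mathcal{C}$ for every $x \in \mathcal{C}'$ and $(0, y) \in \mathcal{C}$ for every $y \in \mathscr{C}$, pairing $l'$ against these two families via the defining inner product gives $u(x \cdot x') = 0$ for all $x \in \mathcal{C}'$ and $(y \cdot y') = 0$ for all $y \in \mathscr{C}$. The second condition says exactly $y' \in \mathscr{C}^{\perp}$. For the first, I would invoke the key ring-theoretic fact that for $a \in \mathbb{F}_q$ one has $ua = 0$ in $\mathcal{R}$ if and only if $a = 0$ (indeed any $a \in \mathbb{F}_q^{*}$ is a unit of $\mathcal{R}$, so $ua = 0$ would force $u = 0$). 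Since $x \cdot x' \in \mathbb{F}_q$, this turns $u(x \cdot x') = 0$ into $x \cdot x' = 0$, i.e. $x' \in \mathcal{C}'^{\perp}$. Conversely, if $x' \in \mathcal{C}'^{\perp}$ and $y' \in \mathscr{C}^{\perp}$, then for a general $(x, y) \in \mathcal{C} = \mathcal{C}' \otimes \mathscr{C}$ the inner product $u(x \cdot x') + (y \cdot y')$ vanishes term by term. This yields $\mathcal{C}^\perp = \mathcal{C}'^{\perp} \otimes \mathscr{C}^{\perp}$.

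With $\mathcal{C}^\perp$ in separable form, the equivalence is immediate. If $\mathcal{C}'^{\perp} \subseteq \mathcal{C}'$ and $\mathscr{C}^{\perp} \subseteq \mathscr{C}$, then any $(x', y') \in \mathcal{C}'^{\perp} \otimes \mathscr{C}^{\perp}$ has both coordinates in the respective larger codes, so lies in $\mathcal{C}' \otimes \mathscr{C} = \mathcal{C}$. For the converse I would use $0$-padding together with separability: given $x' \in \mathcal{C}'^{\perp}$, the word $(x', 0)$ lies in $\mathcal{C}'^{\perp} \otimes \mathscr{C}^{\perp} = \mathcal{C}^\perp \subseteq \mathcal{C} = \mathcal{C}' \otimes \mathscr{C}$, and separability forces $x' \in \mathcal{C}'$; symmetrically $(0, y')$ yields $y' \in \mathscr{C}$ for every $y' \in \mathscr{C}^{\perp}$. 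Here one uses only the trivial memberships $0 \in \mathcal{C}'^{\perp}$ and $0 \in \mathscr{C}^{\perp}$.

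I expect the only real subtlety to be the separation lemma, and within it the handling of the weighting factor $u$ on the $\mathbb{F}_q$-block of the inner product: one must argue that this factor does not shrink the $\mathbb{F}_q$-component of the dual below the full Euclidean dual $\mathcal{C}'^{\perp}$. The annihilator computation $ua = 0 \Leftrightarrow a = 0$ for $a \in \mathbb{F}_q$ is exactly what rules this out, so this is the step I would state most carefully. Everything after the separation lemma is bookkeeping with direct products.
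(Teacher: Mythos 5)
Your proof is correct, and it differs from the paper's in a way worth noting: you organize everything around an explicit \emph{separation lemma}, $\mathcal{C}^{\perp}=\mathcal{C}'^{\perp}\otimes\mathscr{C}^{\perp}$, which the paper never states, let alone proves. The paper's forward direction instead takes $c=(c_1,c_2)$ and $c'=(c_1',c_2')$ with $c_1,c_1'\in\mathcal{C}'^{\perp}$, $c_2,c_2'\in\mathscr{C}^{\perp}$, asserts (without argument) that such pairs lie in $\mathcal{C}^{\perp}$, uses the hypotheses to get $c_1\cdot c_1'=0$ and $c_2\cdot c_2'=0$, and concludes $c\cdot c'=uc_1\cdot c_1'+c_2\cdot c_2'=0$; that is, it shows $\mathcal{C}'^{\perp}\otimes\mathscr{C}^{\perp}$ is self-orthogonal and then jumps to $\mathcal{C}^{\perp}\subseteq\mathcal{C}$ --- a jump that tacitly needs both inclusions of your lemma together with $(\mathcal{C}^{\perp})^{\perp}=\mathcal{C}$, while the converse is dismissed as following ``from the definition.'' Your route avoids any appeal to double duals: once $\mathcal{C}^{\perp}=\mathcal{C}'^{\perp}\otimes\mathscr{C}^{\perp}$ is established, the forward implication is a componentwise containment of direct products and the converse is the zero-padding argument. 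You also isolate the one genuinely non-formal point, namely that $ua=0$ in $\mathcal{R}$ forces $a=0$ for $a\in\mathbb{F}_q$ (since nonzero elements of $\mathbb{F}_q$ are units of $\mathcal{R}$, or equivalently $\{1,u\}$ is an $\mathbb{F}_q$-basis), which is exactly what prevents the weight $u$ on the $\mathbb{F}_q$-block of the inner product from enlarging the first component of the dual; the paper never addresses this. In short, both arguments rest on the same structural fact about duals of separable codes, but you prove it and the paper assumes it, so your version is the more complete of the two.
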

\begin{proof}
Let $\mathcal{C}'^{\perp}\subseteq\mathcal{C}'$ and $\mathscr{C}^{\perp}\subseteq\mathscr{C}$. Let $c_1,c_1'\in \mathcal{C}'^{\perp}$ and $c_2,c_2'\in \mathscr{C}^{\perp}$. Then $c=(c_1,c_2), c'=(c_1',c_2')\in \mathcal{C}^{\perp}$.
 This implies $c_1\cdot c_1'=0$ in $\mathbb{F}_q$ and $ c_2\cdot c_2'=0$ in $\mathcal{R}$. Therefore,
$ c\cdot c'=uc_1c_1'+c_2c_2'=0$. Hence, $\mathcal{C}^{\perp}\subseteq\mathcal{C}$. \\
Converse follows directly from the definition.\qed
\end{proof}
\section{Gray Map}
Any arbitrary element of $\mathbb{F}_q\mathcal{R}$ can be uniquely written  as $(a,r)=(a,\xi_1r_1+\xi_2r_2)$, where $a\in\mathbb{F}_q$ and $r\in  \mathcal{R}$. Let $GL_2(\mathbb{F}_q)$ be the set of all $2\times 2$ invertible matrices over $\mathbb{F}_q$. Consider a Gray map (see \cite{Li20,Prakash} for similar mapping) $\phi:\mathcal{R}\longrightarrow \mathbb{F}_q^2$ defined by $$\phi(r_1+ur_2)=(r_1,r_2)M,$$ where $M\in GL_2(\mathbb{F}_q)$ such that $MM^T=\gamma I_2$, $M^T$ is the transpose matrix of $M$, $\gamma\in \mathbb{F}_q^*$ and $I_2$ is the identity matrix of order $2$. Then $\phi$ is a linear bijection and can be extended to $\mathcal{R}^n$ componentwise. Particularly, when $M=I_2$, we have the following Gray map
$\phi_1:\mathcal{R}\longrightarrow \mathbb{F}_q^2$
\begin{equation}\label{eq2}
	\phi_1(r_1+ur_2)=(r_1,r_2).
\end{equation}
Now, we define a Gray map $\varphi:\mathbb{F}_q\mathcal{R}\longrightarrow \mathbb{F}_q^3$ as follows:
$$\varphi(a,r)=(a,\phi(r))=(a,(r_1,r_2)M).$$ It is easy to verify that $\varphi$ is an $\mathbb{F}_q$-linear map and can be extended componentwise to $\mathbb{F}_q^{\alpha}\mathcal{R}^{\beta}$ in the following manner:
$$\varphi:\mathbb{F}_q^{\alpha}\mathcal{R}^{\beta}\longrightarrow \mathbb{F}_q^{\alpha+2\beta}$$
\begin{align*}
\varphi(a_0,a_1,\dots,a_{\alpha-1},r_0,r_1,\dots,r_{\beta-1})=(&a_0,a_1,\dots,a_{\alpha-1}, (r_{0,1},r_{0,2})M,(r_{1,1},r_{1,2})M,\\&\dots,(r_{\beta-1,1},r_{\beta-1,2})M),
\end{align*}
where $(a_0,a_1,\dots,a_{\alpha-1})\in \mathbb{F}_q^{\alpha}$ and $(r_0,r_1,\dots,r_{\beta-1})\in \mathcal{R}^{\beta}$ such that $r_j=\xi_1r_{j,1}+\xi_2r_{j,2}\in \mathcal{R}$ for $j=0,1,\dots,\beta-1$. The Lee weight of an element $(a,r)\in \mathbb{F}_q\mathcal{R}$ as $w_L(a,r)=w_H(a)+w_L(r)$ for any $(a,r)\in\mathbb{F}_q^{\alpha}\times\mathcal{R}^{\beta} $ where $w_H$ denotes the Hamming weight and $w_L$ denotes the Lee weight. Furthermore, the Lee distance between $c,c'\in \mathbb{F}_q^{\alpha}\times\mathcal{R}^{\beta} $ is defined as $d_L(c,c')=w_L(c-c')=w_H(\varphi(c-c'))$.
\begin{proposition}\label{prop1}
The Gray map $\varphi$ is an  $\mathbb{F}_q$-linear and distance preserving map from $\mathbb{F}_q^{\alpha}\mathcal{R}^{\beta}$ (Lee distance) to $\mathbb{F}_q^{\alpha+2\beta}$ (Hamming distance).
\end{proposition}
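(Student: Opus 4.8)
The plan is to establish the two assertions separately, handling $\mathbb{F}_q$-linearity first and then reducing distance preservation to a single coordinatewise weight identity. For linearity I would note that $\varphi$ is assembled from two $\mathbb{F}_q$-linear pieces: on the first $\alpha$ coordinates it acts as the identity, while on each of the last $\beta$ coordinates it acts as $\phi$. Since $\phi(r)=(r_1,r_2)M$ is the composition of the $\mathbb{F}_q$-linear coordinate map $r\mapsto(r_1,r_2)$ with right multiplication by the fixed matrix $M$, every block is $\mathbb{F}_q$-linear, and hence so is $\varphi$. This is already asserted in the construction above, so I would only record it.

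For distance preservation I would first reduce to a weight statement. Because $\varphi$ is $\mathbb{F}_q$-linear, $\varphi(c)-\varphi(c')=\varphi(c-c')$ for all $c,c'\in\mathbb{F}_q^\alpha\times\mathcal{R}^\beta$, so that $d_H(\varphi(c),\varphi(c'))=w_H(\varphi(c-c'))$, while by definition $d_L(c,c')=w_L(c-c')$. Thus it suffices to prove the single weight identity $w_L(x)=w_H(\varphi(x))$ for every $x\in\mathbb{F}_q^\alpha\times\mathcal{R}^\beta$; the distance claim then follows by applying it to $x=c-c'$.

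To prove this identity I would write $x=(a_0,\dots,a_{\alpha-1},r_0,\dots,r_{\beta-1})$ and use that both $w_L$ and $w_H$ split as sums over coordinate blocks. On the $\mathbb{F}_q$-part the map $\varphi$ is the identity, so those blocks contribute $\sum_i w_H(a_i)$ to each side. On the $\mathcal{R}$-part, the defining relation $w_L(a,r)=w_H(a)+w_L(r)$ together with the convention $w_L(r_j)=w_H(\phi(r_j))$ makes the total Lee contribution equal to $\sum_j w_H(\phi(r_j))$, which is exactly the Hamming contribution of the image blocks $\phi(r_j)\in\mathbb{F}_q^2$ to $w_H(\varphi(x))$. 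Summing the two parts yields the identity.

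The step needing the most care is the $\mathcal{R}$-block bookkeeping: one must confirm that the Hamming weight of each image block $\phi(r_j)$ is recorded correctly and that $\phi$ introduces no spurious cancellations, which is guaranteed by the invertibility of $M$, since this makes $\phi$ a bijection so that $\phi(r_j)=0$ if and only if $r_j=0$. I would emphasize that only invertibility of $M$ is used here; the stronger condition $MM^T=\gamma I_2$ plays no role in distance preservation and is reserved for the dual and orthogonality results.
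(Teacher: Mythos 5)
Your proof is correct and follows essentially the same route as the paper's: the paper verifies $\mathbb{F}_q$-linearity by an explicit coordinatewise expansion of $\varphi(c+c')$ and $\varphi(\lambda c)$ (where you argue blockwise via composition of linear maps), and then obtains distance preservation from the chain $d_L(c,c')=w_L(c-c')=w_H(\varphi(c-c'))=w_H(\varphi(c)-\varphi(c'))=d_H(\varphi(c),\varphi(c'))$, exactly your reduction to the weight identity. Your closing observation is accurate: since the paper \emph{defines} $d_L(c,c')=w_H(\varphi(c-c'))$, the weight identity is essentially definitional and only linearity of $\varphi$ is genuinely needed, with $MM^T=\gamma I_2$ reserved for the duality results.
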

\begin{proof}
Let $c=(a,r), c'=(a',r')\in \mathbb{F}_q^{\alpha}\mathcal{R}^{\beta}$, where $a=(a_0,a_1,\dots,a_{\alpha-1})\in\mathbb{F}_q^{\alpha}$, $r=(r_0,r_1,\dots,r_{\beta-1})\in\mathcal{R}^{\beta}$, $a'=(a_0',a_1',\dots,a_{\alpha-1}')\in\mathbb{F}_q^{\alpha}$, $r'=(r_0',r_1',\dots,r_{\beta-1}')\in\mathcal{R}^{\beta}$ such that
$r_j=\xi_1r_{j,1}+\xi_2r_{j,2}$ and $r_j'=\xi_1r_{j,1}'+\xi_2r_{j,2}'$ for $j=0,1,\dots,\beta-1$. Then
\begin{align*}
	\varphi(c+c')=&\varphi(a+a',r+r')\\
	=&\varphi(a_0+a_0',a_1+a_1',\dots, a_{\alpha-1}+a_{\alpha-1}',r_0+r_0',r_1+r_1',\dots,r_{\beta-1}+r_{\beta-1}')\\
	=&\varphi(a_0+a_0',a_1+a_1',\dots, a_{\alpha-1}+a_{\alpha-1}',\xi_1(r_{0,1}+r_{0,1}')+\xi_2(r_{0,2}+r_{0,2}'),\\&\xi_1(r_{1,1}+r_{1,1}')+\xi_2(r_{1,2}+r_{1,2}'),\dots,\xi_1(r_{\beta-1,1}+r_{\beta-1,1}')+\xi_2(r_{\beta-1,2}+r_{\beta-1,2}')\\
	=&(a_0+a_0',a_1+a_1',\dots, a_{\alpha-1}+a_{\alpha-1}',(r_{0,1}+r_{0,1}',r_{0,2}+r_{0,2}')M,(r_{1,1}+r_{1,1}',r_{1,2}\\&+r_{1,2}')M,\dots,(r_{\beta-1,1}+r_{\beta-1,1}',r_{\beta-1,2}+r_{\beta-1,2}')M)\\
	=&(a_0,a_1,\dots,a_{\alpha-1}, (r_{0,1},r_{0,2})M,(r_{1,1},r_{1,2})M,\dots,(r_{\beta-1,1},r_{\beta-1,2})M)+(a_0',a_1',\dots,\\&a_{\alpha-1}', (r_{0,1}',r_{0,2}')M,(r_{1,1}',r_{1,2}')M,\dots,(r_{\beta-1,1}',r_{\beta-1,2}')M)\\
	=&\varphi(a,r)+\varphi(a',r')\\
	=&\varphi(c)+\varphi(c').
\end{align*} Now, for any $\lambda\in \mathbb{F}_q$, we have
\begin{align*}
	\varphi(\lambda c)&=\varphi(\lambda a,\lambda r)\\&=(\lambda(a_0,a_1,\dots,a_{\alpha-1}),\lambda (r_{0,1},r_{0,2})M,\lambda(r_{1,1},r_{1,2})M,\dots,\lambda(r_{\beta-1,1},r_{\beta-1,2})M)\\
	&=\lambda(a_0,a_1,\dots,a_{\alpha-1}, (r_{0,1},r_{0,2})M,(r_{1,1},r_{1,2})M,\dots,(r_{\beta-1,1},r_{\beta-1,2})M)\\
	&=\lambda\varphi(a,r)\\
	&=\lambda\varphi(c).
\end{align*} Thus, $\varphi$ is an $\mathbb{F}_q$-linear map. Moreover, $d_L(c,c')=w_L(c-c')=w_H(\varphi(c-c'))=w_H(
\varphi(c)-\varphi(c'))=d_H(\varphi(c),\varphi(c'))$. Therefore, $\varphi$ is a distance preserving map.\qed
\end{proof}
\begin{theorem}
If $\mathcal{C}$ is an $[n,k,d_L]$ $\mathbb{F}_q\mathcal{R}$-linear code, then $\varphi(\mathcal{C})$ is a $[\alpha+2\beta,k,d_H]$ linear code over $\mathbb{F}_q$.
\end{theorem}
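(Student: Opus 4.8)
The plan is to read off each of the three parameters of $\varphi(\mathcal{C})$ directly from the structural facts about the Gray map that were already secured in Proposition \ref{prop1}, namely that $\varphi$ is $\mathbb{F}_q$-linear and Lee-to-Hamming distance preserving, supplemented by the observation that $\varphi$ is in fact an $\mathbb{F}_q$-linear bijection. The length is immediate: by construction $\varphi$ takes values in $\mathbb{F}_q^{\alpha+2\beta}$, so $\varphi(\mathcal{C})$ is a code of length $\alpha+2\beta$ over $\mathbb{F}_q$, which accounts for the first parameter.

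First I would settle the $\mathbb{F}_q$-linearity of the image. Although $\mathcal{C}$ is defined as an $\mathcal{R}$-submodule of $\mathbb{F}_q^\alpha\times\mathcal{R}^\beta$, it is in particular closed under the $*$-action of the scalars $\lambda\in\mathbb{F}_q\hookrightarrow\mathcal{R}$; since $\eta(\lambda)=\lambda$ for such $\lambda$, this action reduces to ordinary $\mathbb{F}_q$-scalar multiplication, so $\mathcal{C}$ is an $\mathbb{F}_q$-subspace. Because $\varphi$ is $\mathbb{F}_q$-linear by Proposition \ref{prop1}, the image $\varphi(\mathcal{C})$ is again an $\mathbb{F}_q$-subspace of $\mathbb{F}_q^{\alpha+2\beta}$, that is, a linear code over $\mathbb{F}_q$.

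Next I would recover the dimension and the minimum distance. Since $\phi$ is a linear bijection and $\varphi(a,r)=(a,\phi(r))$ fixes the first $\alpha$ coordinates, $\varphi$ is an injective $\mathbb{F}_q$-linear map (indeed an isomorphism onto $\mathbb{F}_q^{\alpha+2\beta}$, as the domain has $\mathbb{F}_q$-dimension $\alpha+2\beta$). An injective linear map preserves $\mathbb{F}_q$-dimension, whence $\dim_{\mathbb{F}_q}\varphi(\mathcal{C})=\dim_{\mathbb{F}_q}\mathcal{C}=k$. For the distance, I would invoke the distance-preserving property of Proposition \ref{prop1}: for any two codewords the Hamming distance of their images equals the Lee distance of the originals, so the minimum Hamming weight over $\varphi(\mathcal{C})\setminus\{0\}$ equals the minimum Lee weight over $\mathcal{C}\setminus\{0\}$, giving $d_H=d_L$.

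The argument is essentially a direct corollary of Proposition \ref{prop1}, and I do not expect a serious obstacle. The only point demanding a moment's care is the dimension claim: one must explicitly record that $\varphi$ is not merely $\mathbb{F}_q$-linear and distance preserving but genuinely injective, so that the $\mathbb{F}_q$-dimension $k$ of $\mathcal{C}$ transfers verbatim to $\varphi(\mathcal{C})$ rather than possibly dropping under the map.
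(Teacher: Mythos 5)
Your proposal is correct and follows essentially the same route as the paper, which simply observes that the result "follows directly from Proposition \ref{prop1} and the definition of the Gray map"; your write-up just makes explicit the injectivity of $\varphi$ (already noted in the paper when $\phi$ is introduced as a linear bijection) that underlies the dimension claim.
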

\begin{proof}
It follows directly from Proposition \ref{prop1} and the definition of the Gray map.
\end{proof}
Now, the following result shows the Gray map $\varphi$ preserves the orthogonality.\qed
\begin{theorem}
Let $\mathcal{C}$ be an $\mathbb{F}_q\mathcal{R}$-skew cyclic code of length $(\alpha,\beta)$. Then $\varphi(\mathcal{C})^\perp=\varphi(\mathcal{C}^\perp)$. Further, $\mathcal{C}$ is self-dual if and only if $\varphi(\mathcal{C})$ is self-dual.
\end{theorem}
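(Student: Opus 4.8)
The plan is to prove the identity $\varphi(\mathcal{C})^\perp=\varphi(\mathcal{C}^\perp)$ by establishing one containment together with a cardinality count, and then to read off the self-dual equivalence as a one-line corollary. First I would set up the dictionary between the $\mathcal{R}$-valued pairing on $\mathbb{F}_q^\alpha\times\mathcal{R}^\beta$ and the Euclidean pairing on $\mathbb{F}_q^{\alpha+2\beta}$. Fixing $l=(a,r)$ and $l'=(a',r')$ and writing $r_j=\xi_1r_{j,1}+\xi_2r_{j,2}$, $r_j'=\xi_1r_{j,1}'+\xi_2r_{j,2}'$, I would expand $l\cdot l'$ using $u=\xi_2$ and the relations $\xi_i^2=\xi_i$, $\xi_1\xi_2=0$; this puts the product in the separated form
\[
l\cdot l'=\xi_1\Big(\textstyle\sum_j r_{j,1}r_{j,1}'\Big)+\xi_2\Big(\textstyle\sum_i a_ia_i'+\sum_j r_{j,2}r_{j,2}'\Big),
\]
so that $l\cdot l'=0$ in $\mathcal{R}$ is equivalent to the two $\mathbb{F}_q$-identities $\sum_j r_{j,1}r_{j,1}'=0$ and $\sum_i a_ia_i'+\sum_j r_{j,2}r_{j,2}'=0$.

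Next I would compute $\varphi(l)\cdot\varphi(l')$ in $\mathbb{F}_q^{\alpha+2\beta}$. Because $\varphi$ leaves the $\mathbb{F}_q^\alpha$-coordinates untouched and sends each block $(r_{j,1},r_{j,2})$ to $(r_{j,1},r_{j,2})M$, the $j$-th block contributes $(r_{j,1},r_{j,2})\,MM^T\,(r_{j,1}',r_{j,2}')^T$. This is exactly the point at which the defining property $MM^T=\gamma I_2$ is invoked: it collapses each block to $\gamma\big(r_{j,1}r_{j,1}'+r_{j,2}r_{j,2}'\big)$, so that
\[
\varphi(l)\cdot\varphi(l')=\textstyle\sum_i a_ia_i'+\gamma\sum_j r_{j,1}r_{j,1}'+\gamma\sum_j r_{j,2}r_{j,2}'.
\]
Substituting the two vanishing conditions obtained above then kills the right-hand side, giving $\varphi(l)\cdot\varphi(l')=0$ for every $l\in\mathcal{C}$ and $l'\in\mathcal{C}^\perp$; hence $\varphi(\mathcal{C}^\perp)\subseteq\varphi(\mathcal{C})^\perp$.

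To turn this containment into an equality I would count. Since $\varphi$ is an $\mathbb{F}_q$-linear bijection from $\mathbb{F}_q^\alpha\times\mathcal{R}^\beta$ onto $\mathbb{F}_q^{\alpha+2\beta}$, it preserves cardinalities, so $|\varphi(\mathcal{C}^\perp)|=|\mathcal{C}^\perp|$ and $|\varphi(\mathcal{C})|=|\mathcal{C}|$, while on the image side $|\varphi(\mathcal{C})^\perp|=q^{\alpha+2\beta}/|\varphi(\mathcal{C})|$. It then remains to record that the pairing $l\cdot l'$ is non-degenerate on $\mathbb{F}_q^\alpha\times\mathcal{R}^\beta$, which is transparent from the separated form: its $\xi_1$-part already detects the $\mathcal{R}$-coordinates through their $\xi_1$-components, and its $\xi_2$-part detects both the $\mathbb{F}_q^\alpha$-coordinates and the $\xi_2$-components. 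Non-degeneracy forces $|\mathcal{C}|\,|\mathcal{C}^\perp|=q^{\alpha+2\beta}$, and comparing the three cardinalities upgrades the containment to $\varphi(\mathcal{C}^\perp)=\varphi(\mathcal{C})^\perp$. The self-dual claim is then immediate: $\mathcal{C}=\mathcal{C}^\perp$ gives $\varphi(\mathcal{C})=\varphi(\mathcal{C}^\perp)=\varphi(\mathcal{C})^\perp$, and conversely $\varphi(\mathcal{C})=\varphi(\mathcal{C})^\perp=\varphi(\mathcal{C}^\perp)$ forces $\mathcal{C}=\mathcal{C}^\perp$ by injectivity of $\varphi$.

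The hard part will be the inner-product bridge of the second paragraph, and in particular making the scalar $\gamma$ produced by $MM^T=\gamma I_2$ cooperate with the two vanishing conditions. One must keep straight the two competing descriptions of $\mathcal{R}$ — the $1,u$ basis in which $\phi$ is originally defined, against the idempotent basis $\xi_1,\xi_2$ in which the pairing separates — and then check that after substitution the surviving $\gamma$-weighted terms genuinely cancel rather than leaving a residual multiple of $\sum_i a_ia_i'$. Confirming this clean cancellation (most cleanly under the orthogonal normalisation $\gamma=1$, or otherwise by verifying that the residual term vanishes on $\mathcal{C}\times\mathcal{C}^\perp$) is the delicate step; once the relation between the two pairings is pinned down, the counting argument and the self-dual corollary are routine.
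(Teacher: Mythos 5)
Your setup is right and your two displayed computations are both correct, but the bridge between them fails: substituting $\sum_j r_{j,1}r_{j,1}'=0$ and $\sum_i a_ia_i'+\sum_j r_{j,2}r_{j,2}'=0$ into $\varphi(l)\cdot\varphi(l')=\sum_i a_ia_i'+\gamma\sum_j r_{j,1}r_{j,1}'+\gamma\sum_j r_{j,2}r_{j,2}'$ does \emph{not} kill the right-hand side; it leaves $(1-\gamma)\sum_i a_ia_i'$, which is precisely the residual you flag in your closing paragraph. Since the paper works with $\gamma=2$ (its matrices satisfy $MM^T=2I_2$), you cannot retreat to the normalisation $\gamma=1$, and you never carry out the alternative you mention of ``verifying that the residual term vanishes on $\mathcal{C}\times\mathcal{C}^\perp$.'' So the containment $\varphi(\mathcal{C}^\perp)\subseteq\varphi(\mathcal{C})^\perp$ --- the substance of the theorem --- is left unproved; this is a genuine gap, not a cosmetic one.

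The missing ingredient is the $\mathcal{R}$-module structure of $\mathcal{C}$. For $l=(a,r)\in\mathcal{C}$ you also have $u*l=(0,ur)\in\mathcal{C}$ and $(1-u)*l=(a,(1-u)r)\in\mathcal{C}$, because $\eta(u)=0$ and $\eta(1-u)=1$. Pairing these with $l'\in\mathcal{C}^\perp$ and separating along $\xi_1,\xi_2$ gives $\xi_2\sum_j r_{j,2}r_{j,2}'=0$ and $\xi_2\sum_i a_ia_i'+\xi_1\sum_j r_{j,1}r_{j,1}'=0$, so the single relation $l\cdot l'=0$ actually splits into the three independent $\mathbb{F}_q$-relations $\sum_i a_ia_i'=0$, $\sum_j r_{j,1}r_{j,1}'=0$ and $\sum_j r_{j,2}r_{j,2}'=0$. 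Each summand of your expression for $\varphi(l)\cdot\varphi(l')$ then vanishes on its own, for every invertible $\gamma$. (The paper's proof tacitly relies on the same separation: it asserts $\sum_i x_ix_i'=0$ and $\sum_j(a_ja_j'+b_jb_j')=0$ as consequences of $l\cdot l'=0$, which do not follow from that one equation alone but do follow once the idempotents are allowed to act.) With this repair, your remaining steps --- the cardinality comparison via bijectivity of $\varphi$ together with $|\mathcal{C}|\,|\mathcal{C}^\perp|=q^{\alpha+2\beta}$, and the one-line self-duality equivalence --- coincide with the paper's and go through.
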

\begin{proof}
Let $\mathcal{C}$ be an $\mathbb{F}_q\mathcal{R}$-skew cyclic code of length $(\alpha,\beta)$. Assume that $l=(x_0,x_1,\dots,x_{\alpha-1},\\y_0,y_1,\dots,y_{\beta-1})\in \mathcal{C}^\perp$ where $y_j=\xi_1a_j+\xi_2b_j$ for $0\leq j\leq \beta-1$. Then $\varphi(l)\in\varphi(\mathcal{C^\perp})$. To show that $\varphi(l)\in \varphi(\mathcal{C})^\perp$, we consider $l'=(x_0',x_1',\dots,x_{\alpha-1}',y_0',y_1',\dots,y_{\beta-1}')\in\mathcal{C}$ where $y_j'=\xi_1a_j'+\xi_2b_j'$ for $0\leq j\leq \beta-1$. Now, $l\cdot l'=0$ implies that
\begin{align*}
l\cdot l'&=u\sum_{i=0}^{\alpha -1}x_{i}x_{i}'+\sum_{j=0}^{\beta -1}y_{j}y_{j}'\\
&=u\sum_{i=0}^{\alpha -1}x_{i}x_{i}'+\sum_{j=0}^{\beta -1}(\xi_1a_ja_j'+\xi_2b_jb_j')=0.
\end{align*}
Further, it gives $\sum_{i=0}^{\alpha -1}x_{i}x_{i}'=0$ and $\sum_{j=0}^{\beta -1}(a_ja_j'+b_jb_j')=0$. Next, we have
\begin{align*}
   \varphi(l)&= (x_0,x_1,\dots,x_{\alpha-1},(a_0,b_0)M,(a_1,b_1)M,\cdots,(a_{\beta-1},b_{\beta-1})M)\\
   &=(x_0,x_1,\dots,x_{\alpha-1},c_0M,c_1M,\cdots,c_{\beta-1}M)\\
   \varphi(l')&= (x_0',x_1',\dots,x_{\alpha-1}',(a_0',b_0')M,(a_1',b_1')M,\cdots,(a_{\beta-1}',b_{\beta-1})'M)\\
   &=(x_0',x_1',\dots,x_{\alpha-1}',c_0'M,c_1'M,\cdots,c_{\beta-1}'M)
\end{align*} where $c_j=(a_j,b_j)$ and $c_j'=(a_j',b_j')$ for $0\leq j\leq \beta-1$. Also,
\begin{align*}
\varphi(l)\cdot \varphi(l')&=\varphi(l) \varphi(l')^T=u\sum_{i=0}^{\alpha -1}x_{i}x_{i}'+\sum_{j=0}^{\beta -1}y_{j}MM^Ty_{j}'^T\\
&=u\sum_{i=0}^{\alpha -1}x_{i}x_{i}'+\gamma\sum_{j=0}^{\beta -1}y_{j}y_{j}'^T\\
&=u\sum_{i=0}^{\alpha -1}x_{i}x_{i}'+\gamma\sum_{j=0}^{\beta -1}y_{j}y_{j}'^T\\
&=u\sum_{i=0}^{\alpha -1}x_{i}x_{i}'+\gamma\sum_{j=0}^{\beta -1}(a_ja_j'+b_jb_j').
\end{align*}
Then, $\varphi(l)\in \varphi(\mathcal{C})^\perp$ and hence $\varphi(\mathcal{C}^\perp)\subseteq \varphi(\mathcal{C})^\perp$. As $\varphi$ is a bijection, $|\varphi(\mathcal{C}^\perp)|=|\varphi(\mathcal{C})^\perp|$. Thus, $\varphi(\mathcal{C}^\perp)= \varphi(\mathcal{C})^\perp$.\\
Let $\mathcal{C}$ be self-dual, i.e., $\mathcal{C}=\mathcal{C}^\perp$. Then $\varphi(\mathcal{C})=\varphi(\mathcal{C}^\perp)= \varphi(\mathcal{C})^\perp$, i.e., $ \varphi(\mathcal{C})$ is self-dual. Conversely, let $ \varphi(\mathcal{C})$ be self-dual.  Then $\varphi(\mathcal{C})=\varphi(\mathcal{C}^\perp)= \varphi(\mathcal{C})^\perp$. Since, $\varphi$ is a bijection, $\mathcal{C}=\mathcal{C}^\perp$. Hence, $\mathcal{C}$ is self-dual.\qed
\end{proof}
\begin{theorem}
Let $\mathcal{C}$ be an $\mathbb{F}_q\mathcal{R}$-skew cyclic code of length $(\alpha,\beta)$. Then $\varphi(\mathcal{C})=\mathcal{C}'\otimes \mathcal{C}_1\otimes\mathcal{C}_2$, $\mathcal{C}'$ is a skew cyclic code of length $\alpha$ in $\mathbb{F}_q[x;\Theta]/\langle x^{\alpha}-1\rangle$ and $\mathcal{C}_1$, $\mathcal{C}_2$ are skew cyclic codes of length $\beta$ over $\mathbb{F}_q$. Furthermore, $|\varphi(\mathcal{C})|=|\mathcal{C}'||\mathcal{C}_1||\mathcal{C}_2|$.
\end{theorem}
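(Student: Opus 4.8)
The plan is to reduce the statement to the two decompositions already available and then transport them through the Gray map. First I would use separability to write $\mathcal{C}=\mathcal{C}'\otimes\mathscr{C}$, where $\mathcal{C}'$ is the projection of $\mathcal{C}$ onto the first $\alpha$ coordinates (a skew cyclic code of length $\alpha$ in $\mathbb{F}_q[x;\Theta]/\langle x^{\alpha}-1\rangle$) and $\mathscr{C}$ is the projection onto the last $\beta$ coordinates (a skew cyclic code of length $\beta$ over $\mathcal{R}$). This is exactly the structure isolated in Theorem \ref{sep} and the surrounding discussion of separable $\mathbb{F}_q\mathcal{R}$-skew cyclic codes, and it guarantees that the $\mathbb{F}_q^{\alpha}$-block and the $\mathcal{R}^{\beta}$-block of a codeword vary independently.

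Next I would split the $\mathcal{R}$-part. Using $\mathcal{R}=\xi_1\mathcal{R}\oplus\xi_2\mathcal{R}$ together with \cite[Theorem 3]{Gursoy}, write $\mathscr{C}=\xi_1\mathcal{C}_1\oplus\xi_2\mathcal{C}_2$ with $\mathcal{C}_1,\mathcal{C}_2$ skew cyclic codes of length $\beta$ over $\mathbb{F}_q$; these are the codes named in the statement. The key observation is that, writing $r_j=\xi_1 r_{j,1}+\xi_2 r_{j,2}$, a vector $(r_0,\dots,r_{\beta-1})$ lies in $\mathscr{C}$ if and only if $(r_{0,1},\dots,r_{\beta-1,1})\in\mathcal{C}_1$ and $(r_{0,2},\dots,r_{\beta-1,2})\in\mathcal{C}_2$ \emph{independently}; this independence is precisely what the direct sum $\oplus$ records.

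Then I would apply $\varphi$ coordinatewise. On the first $\alpha$ coordinates $\varphi$ acts as the identity and returns $\mathcal{C}'$ verbatim, while on each $\mathcal{R}$-coordinate it sends $r_j\mapsto(r_{j,1},r_{j,2})M$. Taking $M=I_2$ (the map $\phi_1$ of \eqref{eq2}), the image of the $\mathcal{R}$-block is the set of all vectors whose $\mathcal{C}_1$-components form a codeword of $\mathcal{C}_1$ and whose $\mathcal{C}_2$-components form a codeword of $\mathcal{C}_2$, independently; after the obvious regrouping of coordinates this is exactly $\mathcal{C}_1\otimes\mathcal{C}_2$. Combining this with the untouched first block and invoking separability to detach $\mathcal{C}'$ as a genuine direct factor yields $\varphi(\mathcal{C})=\mathcal{C}'\otimes\mathcal{C}_1\otimes\mathcal{C}_2$. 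Since $\varphi$ is a bijection (Proposition \ref{prop1}) and the three factors range independently, the cardinality identity $|\varphi(\mathcal{C})|=|\mathcal{C}'|\,|\mathcal{C}_1|\,|\mathcal{C}_2|$ follows at once.

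The step I expect to demand the most care is the identification $\phi(\mathscr{C})=\mathcal{C}_1\otimes\mathcal{C}_2$: the Gray map interleaves the two $\mathbb{F}_q$-images of each $\mathcal{R}$-symbol, so the equality is literal only after a fixed permutation of coordinates (equivalently, up to monomial equivalence), and for a general $M\in GL_2(\mathbb{F}_q)$ each output pair is an $\mathbb{F}_q$-combination of $r_{j,1},r_{j,2}$, so the clean factorization is most transparent for $M=I_2$. I would also make explicit that separability is the hypothesis doing the real work here: the cardinality identity forces $|\mathcal{C}|=|\mathcal{C}'|\,|\mathscr{C}|$, so without separability $\varphi(\mathcal{C})$ need not split as a direct product, and this assumption (implicit from the preceding separable setting) should be stated in the hypotheses.
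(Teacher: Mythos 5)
Your proof is correct and follows the same basic route as the paper's: split $\mathcal{C}$ into the three component codes $\mathcal{C}'$, $\mathcal{C}_1$, $\mathcal{C}_2$ and push the decomposition through the coordinatewise Gray map. The differences are in emphasis and rigor. The paper's proof works directly with codewords: it defines $\mathcal{C}'$, $\mathcal{C}_1$, $\mathcal{C}_2$ as the coordinate projections of $\mathcal{C}$, verifies by hand that each is closed under the appropriate skew cyclic shift (using that $\sigma(c)\in\mathcal{C}$), and then simply asserts $\varphi(\mathcal{C})=\mathcal{C}'\otimes\mathcal{C}_1\otimes\mathcal{C}_2$ together with the cardinality identity. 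You instead lean on the separable setting of Theorem \ref{sep} and on \cite[Theorem 3]{Gursoy} to obtain the skew-cyclicity of the factors, and you spend your effort on the two points the paper passes over: (i) the equality $\varphi(\mathcal{C})=\mathcal{C}'\otimes\mathcal{C}_1\otimes\mathcal{C}_2$ is literal only after a fixed permutation of coordinates, since the Gray map interleaves the two $\mathbb{F}_q$-images of each $\mathcal{R}$-symbol, and is cleanest for $M=I_2$; and (ii) separability of $\mathcal{C}$ across the $\mathbb{F}_q^{\alpha}$/$\mathcal{R}^{\beta}$ split is genuinely needed. Point (ii) is a real catch: for a non-separable code generated by $(f(x),0)$ and $(m(x),g(x))$ with $m(x)\neq 0$, the three projections are still skew cyclic, but $|\mathcal{C}|<|\mathcal{C}'|\,|\mathcal{C}_1|\,|\mathcal{C}_2|$ and $\varphi(\mathcal{C})$ is a proper subset of the product, so the stated conclusion fails; the paper's proof does not address this, and your explicit addition of the separability hypothesis is the correct fix. (The $\xi_1$/$\xi_2$ splitting of the $\mathcal{R}$-block, by contrast, is automatic for any $\mathcal{R}$-linear code, so no extra hypothesis is needed there.)
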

\begin{proof}
Consider an element $c=(a,r)\in \mathcal{C}$ such that $a=(a_0,a_1,\dots,a_{\alpha-1})\in\mathbb{F}_q^{\alpha}$ and $r=(r_0,r_1,\dots,r_{\beta-1})\in \mathcal{R}^{\beta}$ where $r_i=b_i+uc_i$ for $i=0,1,\dots,\beta-1$. Now, we define
$$\mathcal{C}':=(a_0,a_1,\dots,a_{\alpha-1}),$$
$$\mathcal{C}_1:=(b_0,b_1,\dots,b_{\beta-1}),$$
and $$\mathcal{C}_2:=(c_0,c_1,\dots,c_{\beta-1}).$$
Next, a given codeword $c'=(a_0,a_1,\dots,a_{\alpha-1})\in\mathcal{C}'$ corresponds to a codeword $c=(a_0,a_1,\dots,a_{\alpha-1},b_0+uc_0,b_1+uc_1,\dots,b_{\beta-1}+uc_{\beta-1})\in\mathcal{C}$. As $\mathcal{C}$ is an $\mathbb{F}_q\mathcal{R}$-skew cyclic code, the $\sigma$-skew cyclic shift of $c$ is given by $\sigma(c)=(\Theta(a_{\alpha-1}),\Theta(a_0),\dots,\Theta(a_{\alpha-2}),\theta(b_{\beta-1}+uc_{\beta-1}),\theta(b_0+uc_0),\theta(b_1+uc_1),\dots,\theta(b_{\beta-2}+uc_{\beta-2}))\in \mathcal{C}$. Therefore, $(\Theta(a_{\alpha-1}),\Theta(a_0),\dots,\\\Theta(a_{\alpha-2}))\in \mathcal{C}'$. Hence, $\mathcal{C}'$ is a skew cyclic code of length $\alpha$ in $\mathbb{F}_q[x;\Theta]/\langle x^{\alpha}-1\rangle.$ Similarly, we get $\mathcal{C}_1$, $\mathcal{C}_2$ are skew cyclic codes of length $\beta$ over $\mathbb{F}_q$. Thus, $\varphi(\mathcal{C})=\mathcal{C}'\otimes \mathcal{C}_1\otimes\mathcal{C}_2$. As $\varphi$ is bijective, $|\mathcal{C}|=|\varphi(\mathcal{C})|$ and hence $|\mathcal{C}|=|\mathcal{C}'||\mathcal{C}_1||\mathcal{C}_2|$.\qed
\end{proof}
Now, we present an example of an $\mathbb{F}_q\mathcal{R}$-skew cyclic code. To compute Gray image $\varphi(\mathcal{C})$ of an $\mathbb{F}_q\mathcal{R}$-skew cyclic code $\mathcal{C}$, first we find the corresponding generator polynomials $f(x)$ of skew cyclic code $\mathcal{C}'$ of length $\alpha$ over $\mathbb{F}_q$ and $g(x)$ of skew cyclic code $\mathscr{C}$ of length $\beta$ over $\mathcal{R}$ with $g(x)=g_1(x)\xi_1+g_2(x)\xi_2$, where $g_1(x)$ and $g_2(x)$ are the generator polynomials of skew cyclic codes $\mathcal{C}_1$  and $\mathcal{C}_2$ over $\mathbb{F}_q$ respectively. Then, with the help of Magma computational algebra system \cite{Bosma}, we find $\varphi(\mathcal{C})$.
\begin{example}
	Let $q=9$, $\alpha=18$, $\beta=36$ and $\mathcal{R}=\mathbb{F}_9[u]/\langle u^2-u\rangle$ where $\mathbb{F}_9=\mathbb{F}_3(w)$ and $w^2+2w+2=0$. In $\mathbb{F}_9$, the Frobenius automorphism $\Theta:\mathbb{F}_9\longrightarrow \mathbb{F}_9$ is defined by $\Theta(a)=a^3$ for all $a\in \mathbb{F}_9$. Therefore, $\mathbb{F}_9[x;\Theta]$ is a skew polynomial ring. In $\mathbb{F}_9[x;\Theta]$, we have
	\begin{align*}
	    x^{18}-1&=x^{15} + w^5x^{14} + x^{13} + 2x^{12} + x^9 + w^5x^8 + x^7 + 2x^6 + x^3 +
        w^5x^2 + x + 2)(x^3\\&+w^3x^2+x+1).\\
	    x^{36}-1&=(x^{35} + w^2x^{34} + x^{33} + w^2x^{32} + x^{31} + w^2x^{30} + x^{29} + w^2x^{28} +
        x^{27} + w^2x^{26} + x^{25}\\& + w^2x^{24} + x^{23} + w^2x^{22} + x^{21} + w^2x^{20} +
        x^{19} + w^2x^{18} + x^{17} + w^2x^{16} + x^{15} + w^2x^{14} \\&+ x^{13} + w^2x^{12} +
        x^{11} + w^2x^{10} + x^9 + w^2x^8 + x^7 + w^2x^6 + x^5 + w^2x^4 + x^3 +
        w^2x^2\\& + x + w^2)(x+w^2).\\
	            &=w^6x^{34} + 2x^{33} + w^5x^{32} + 2x^{31} + x^{30} + w^2x^{28} + x^{27} + wx^{26}
        + x^{25} + 2x^{24} + w^6x^{22} \\&+ 2x^{21}+ w^5x^{20} + 2x^{19} + x^{18} +
        w^2x^{16} + x^{15} + wx^{14} + x^{13} + 2x^{12} + w^6x^{10} + 2x^9 \\&+ w^5x^8 +
        2x^7 + x^6 + w^2x^4 + x^3 + wx^2 + x + 2)(w^2x^2+x+1).
	\end{align*}
	Now, let $f(x)=x^3+w^3x^2+x+1,$ $g_1(x
	)=x+w^2$ and $g_2(x)=w^2x^2+x+1$. Then $\mathcal{C}$ is an $\mathbb{F}_q\mathcal{R}$-skew cyclic code of length $(18,36)$.
Let 	
	\begin{equation*}\label{eq3}
	M=
	\begin{pmatrix}
	1&1\\
	1&-1
	\end{pmatrix}\in GL_2(\mathbb{F}_9)
	\end{equation*}
	such that $MM^T=2I_2$. Then $\varphi(\mathcal{C})$ is a $[90,84,4]_9$ linear code over $\mathbb{F}_9$ which is BKLC (best-known linear code) as per the database \cite{Grassl}.
	\end{example}
\section{Quantum Codes from $\mathbb{F}_q\mathcal{R}$-Skew Cyclic Codes}
Recall that the set of $n$-fold tensor product $(\mathbb{C}^q)^{\otimes n}=\underbrace{\mathbb{C}^q\otimes \mathbb{C}^q\otimes\cdots  \otimes \mathbb{C}^q}_{n\rm\ times}$ is a $q^n$-dimensional Hilbert space, where $\mathbb{C}^q$ is a $q$-dimensional complex Hilbert space. A $q$-ary quantum code of length $n$, dimension $k$ and minimum distance $d$ is denoted by $[[n,k,d]]_q$, is a $q^k$-dimensional subspace of $(\mathbb{C}^q)^{\otimes n}$. A quantum code $[[n,k,d]]_q$ is known as quantum MDS (maximum-distance-separable) if it attains the singleton bound $k+2d\leq n+2$. Note that a quantum code $[[n,k,d]]_q$ is said to be better than $[[n',k',d']]_q$ if any one of the following or both hold:
\begin{enumerate}
    \item $d>d'$ when the code rate $\frac{k}{n}=\frac{k'}{n'}$ (Larger distance with same code rate).
    \item $\frac{k}{n}>\frac{k'}{n'}$ when the distance $d=d'$ (Larger code rate with same distance).
\end{enumerate}
The relationship between quantum information and classical information is a matter currently receiving much attention from researchers. One of the fascinating developments in the study of linear codes is the construction of quantum codes from classical codes. The first quantum code was independently studied by Shor \cite{Shor} and Steane \cite{Steane}. However, the construction of quantum codes from classical codes, their existence proofs and correction methods were discussed by Calderbank et al. \cite{Calderbank}.
One of the widely used techniques to obtain quantum codes from classical linear codes is the CSS construction (Lemma \ref{lemma css}) in which dual containing linear codes play an instrumental role.
\begin{lemma}\cite[Theorem 3]{Grassl04} \label{lemma css}
Let $\mathcal{C}_{1}=[n,k_{1},d_{1}]_{q}$ and $\mathcal{C}_{2}=[n,k_{2},d_{2}]_{q}$ be two linear codes over $GF(q)$ with $\mathcal{C}_{2}^{\perp}\subseteq \mathcal{C}_{1}$. Then there exists a quantum error-correcting code $\mathcal{C}=[[n,k_{1}+k_{2}-n,d]]$ where $d=\min\{w(v): v\in (\mathcal{C}_{1}\backslash \mathcal{C}_{2}^{\perp})\cup (\mathcal{C}_{2}\backslash\mathcal{C}_{1}^{\perp})\}\geq \min\{ d_{1},d_{2}\}$. Further, if $\mathcal{C}_{1}^{\perp}\subseteq \mathcal{C}_{1},$ then there exists a quantum error-correcting code $\mathcal{C}=[[n,2k_{1}-n,d_{1}]]$, where $d_{1}=\min\{w(v):v\in \mathcal{C}_{1} \backslash \mathcal{C}_{1}^{\perp}\}.$
\end{lemma}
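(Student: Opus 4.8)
The plan is to realize the asserted quantum code as a stabilizer code over the alphabet $\mathbb{F}_q$, $q=p^m$, so that the dual-containment hypothesis translates directly into the abelianness of the stabilizer. Fix an orthonormal basis $\{\,|v\rangle : v\in\mathbb{F}_q\,\}$ of $\mathbb{C}^q$, let $\mathrm{Tr}:\mathbb{F}_q\rightarrow\mathbb{F}_p$ be the field trace, and put $\omega=e^{2\pi i/p}$. For $a,b\in\mathbb{F}_q$ define generalized Pauli operators by $X(a)|v\rangle=|v+a\rangle$ and $Z(b)|v\rangle=\omega^{\mathrm{Tr}(bv)}|v\rangle$, and extend them to $(\mathbb{C}^q)^{\otimes n}$ componentwise, $X(\mathbf a)=X(a_0)\otimes\cdots\otimes X(a_{n-1})$ and likewise $Z(\mathbf b)$. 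A direct computation gives the commutation rule $X(\mathbf a)Z(\mathbf b)=\omega^{-\mathrm{Tr}(\mathbf a\cdot\mathbf b)}Z(\mathbf b)X(\mathbf a)$, where $\mathbf a\cdot\mathbf b$ is the Euclidean inner product over $\mathbb{F}_q$; this is the only identity I would really need.

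First I would assemble the stabilizer. Let $S$ be the group generated by all $X(\mathbf h)$ with $\mathbf h\in\mathcal{C}_1^\perp$ together with all $Z(\mathbf h')$ with $\mathbf h'\in\mathcal{C}_2^\perp$. Any two X-type (or two Z-type) generators commute automatically, while $X(\mathbf h)$ and $Z(\mathbf h')$ commute exactly when $\mathrm{Tr}(\mathbf h\cdot\mathbf h')=0$. The substantive step—and the one place where the hypothesis is used—is that this holds for every pair: the condition $\mathcal{C}_2^\perp\subseteq\mathcal{C}_1$ is equivalent to $\mathcal{C}_1^\perp\subseteq(\mathcal{C}_2^\perp)^\perp=\mathcal{C}_2$, i.e. to the Euclidean orthogonality $\mathbf h\cdot\mathbf h'=0$ for all $\mathbf h\in\mathcal{C}_1^\perp$ and $\mathbf h'\in\mathcal{C}_2^\perp$, and $\mathbf h\cdot\mathbf h'=0$ forces $\mathrm{Tr}(\mathbf h\cdot\mathbf h')=0$. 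Hence $S$ is abelian, and since X-type operators are genuine permutations while Z-type operators are diagonal, $S$ contains no nontrivial scalar; thus $S$ is a legitimate stabilizer.

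Next I would read off the parameters. The associated classical $\mathbb{F}_q$-linear code is $\mathcal{C}_1^\perp\times\mathcal{C}_2^\perp$, of $\mathbb{F}_q$-dimension $(n-k_1)+(n-k_2)$, so the joint $(+1)$-eigenspace $\mathcal{Q}$ has dimension $q^{\,n-((n-k_1)+(n-k_2))}=q^{\,k_1+k_2-n}$, yielding the length $n$ and dimension $k_1+k_2-n$ (nonnegative, since $\mathcal{C}_2^\perp\subseteq\mathcal{C}_1$ forces $n-k_2\leq k_1$). For the distance I would invoke the standard description of the logical operators as the centralizer of $S$ modulo $S$: using that for $\mathbb{F}_q$-linear codes the trace-dual coincides with the Euclidean dual, one type of nontrivial logical operator is represented by the cosets $\mathcal{C}_1/\mathcal{C}_2^\perp$ and the other by $\mathcal{C}_2/\mathcal{C}_1^\perp$. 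Consequently a minimum-weight nontrivial logical operator has weight $d=\min\{\,w(\mathbf v):\mathbf v\in(\mathcal{C}_1\setminus\mathcal{C}_2^\perp)\cup(\mathcal{C}_2\setminus\mathcal{C}_1^\perp)\,\}$. Since $\mathcal{C}_1\setminus\mathcal{C}_2^\perp$ and $\mathcal{C}_2\setminus\mathcal{C}_1^\perp$ consist of nonzero codewords of $\mathcal{C}_1$ and $\mathcal{C}_2$, every such $\mathbf v$ has weight at least $d_1$ or at least $d_2$, whence $d\geq\min\{d_1,d_2\}$.

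Finally, the second assertion is the specialization $\mathcal{C}_2=\mathcal{C}_1$: the hypothesis becomes the dual-containing condition $\mathcal{C}_1^\perp\subseteq\mathcal{C}_1$, the dimension reduces to $2k_1-n$, and both coset sets collapse to $\mathcal{C}_1\setminus\mathcal{C}_1^\perp$, giving $[[n,2k_1-n,d_1]]$ with $d_1=\min\{\,w(\mathbf v):\mathbf v\in\mathcal{C}_1\setminus\mathcal{C}_1^\perp\,\}$. I expect the abelian verification to be the main obstacle, since it is the only step that uses the hypothesis and, over a true extension field $q=p^m$, the commutation relation must be handled through the trace form rather than over $\mathbb{F}_p$ directly; once abelianness is in place, the dimension count and the centralizer description of the distance are routine consequences of stabilizer-code theory. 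An equivalent, more elementary route is the original coset-superposition construction, defining basis states $|x+\mathcal{C}_2^\perp\rangle=|\mathcal{C}_2^\perp|^{-1/2}\sum_{d\in\mathcal{C}_2^\perp}|x+d\rangle$ for $x$ ranging over $\mathcal{C}_1/\mathcal{C}_2^\perp$ and passing to the Fourier-dual basis to analyze phase errors, but the stabilizer viewpoint is cleaner for the distance claim.
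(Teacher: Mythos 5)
The paper does not actually prove this lemma: it is imported verbatim as a citation to Grassl--Beth--Roetteler \cite{Grassl04}, so there is no internal proof to compare against. Your blind proof is a correct, self-contained derivation of the $q$-ary CSS construction via the stabilizer formalism, and it is essentially the standard argument one finds in the cited literature. The three key steps all check out: (i) the commutation rule $X(\mathbf a)Z(\mathbf b)=\omega^{-\mathrm{Tr}(\mathbf a\cdot\mathbf b)}Z(\mathbf b)X(\mathbf a)$ is right, and the hypothesis $\mathcal{C}_2^{\perp}\subseteq\mathcal{C}_1$ (equivalently $\mathcal{C}_1^{\perp}\subseteq\mathcal{C}_2$) does give Euclidean orthogonality of $\mathcal{C}_1^{\perp}$ against $\mathcal{C}_2^{\perp}$, hence vanishing trace and an abelian, scalar-free stabilizer; (ii) the dimension count $q^{\,k_1+k_2-n}$ follows from the order of the stabilizer (counted over $\mathbb{F}_p$ it is $p^{m((n-k_1)+(n-k_2))}$, which reconciles with your $\mathbb{F}_q$-dimension bookkeeping); (iii) the identification of the centralizer is the one step that genuinely needs the trace-dual-equals-Euclidean-dual observation, and you invoke it correctly --- since $\mathcal{C}_2^{\perp}$ is $\mathbb{F}_q$-linear, $\mathrm{Tr}(\lambda(\mathbf a\cdot\mathbf h'))=0$ for all $\lambda\in\mathbb{F}_q$ forces $\mathbf a\cdot\mathbf h'=0$, so the $X$-part of any centralizing operator lies in $\mathcal{C}_2$ and the $Z$-part in $\mathcal{C}_1$, giving logical classes $\mathcal{C}_2/\mathcal{C}_1^{\perp}$ and $\mathcal{C}_1/\mathcal{C}_2^{\perp}$ and the stated distance. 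The specialization $\mathcal{C}_2=\mathcal{C}_1$ for the second assertion is immediate. In short, where the paper uses the lemma as a black box, you have supplied a valid proof; the only cosmetic caveat is that in the degenerate case $\mathcal{C}_1=\mathcal{C}_2^{\perp}$ the code encodes nothing and the distance is a minimum over an empty set, which is the usual convention and not a gap.
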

Now, we recall from \cite{Boucher2} that a skew cyclic code of length $n$ over $\mathbb{F}_q$ is a principally generated left $\mathbb{F}_q[x;\Theta]$-submodule of $\frac{\mathbb{F}_q[x;\Theta]}{\langle x^n-1\rangle}$ and $\mathcal{C}=\langle g(x)\rangle$ where $x^n-1=h(x)g(x)$, i.e., $g(x)$ is a right divisor of $x^n-1$. Its dual $\mathcal{C}^{\perp}$ is also a skew cyclic code of length $n$ and $\mathcal{C}^{\perp}=\langle h^{\dagger}(x)\rangle$ where $h^{\dagger}(x)=h_{n-r}+\Theta(h_{n-r-1})x+\dots+\Theta^{n-r}(h_0)x^{n-r}$ for $h(x)=h_0+h_1x+\dots+h_{n-r}x^{n-r}$. If $\Theta=id$, then $h^{\dagger}(x)=h^*(x)$ where $h^*(x)=h_{n-r}+h_{n-r-1}x+\dots+h_{0}x^{n-r}$. Further, we recall the dual containing property for skew cyclic codes.
\begin{lemma}\label{thm constadual}\cite{Islam18b}
Let $\gcd(\alpha, |\langle \Theta\rangle|)=1$ and $\mathcal{C}=\langle f(x)\rangle$ be a skew cyclic
code of length $\alpha$ over $\mathbb{F}_q$.
Then $\mathcal{C}^{\perp}\subseteq \mathcal{C}$ if and only if $x^{\alpha}-1\equiv 0 \pmod{f(x)f^*(x)},$ where $f^*(x)$ is the reciprocal polynomial of $f(x)$.
\end{lemma}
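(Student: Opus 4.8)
The plan is to convert the containment $\mathcal{C}^\perp \subseteq \mathcal{C}$ into a divisibility of polynomials and then to remove the automorphism so that the classical reciprocal computation for cyclic codes applies. Write $x^\alpha - 1 = h(x) f(x)$ with $\deg f = r$. Then $\mathcal{C} = \langle f(x)\rangle$ consists exactly of the residues in $\mathbb{F}_q[x;\Theta]/\langle x^\alpha - 1\rangle$ that are right-divisible by $f(x)$, and by the description of the skew dual recalled just above, $\mathcal{C}^\perp = \langle h^\dagger(x)\rangle$. Hence the first step is the reformulation: $\mathcal{C}^\perp \subseteq \mathcal{C}$ if and only if $f(x)$ right-divides $h^\dagger(x)$, equivalently $h^\dagger(x) \in \mathcal{C}$.

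Next I would invoke the hypothesis $\gcd(\alpha, |\langle\Theta\rangle|) = 1$. Exactly as in Corollary \ref{cor1}, this coprimality forces a skew cyclic code of length $\alpha$ to be an ordinary cyclic code, so that the right divisors of $x^\alpha - 1$ in $\mathbb{F}_q[x;\Theta]$ are its divisors in the commutative ring $\mathbb{F}_q[x]$ and right-divisibility becomes ordinary divisibility. Since the Euclidean dual is the same set whether computed in the skew or the commutative setting, the two descriptions of $\mathcal{C}^\perp$ must agree as codes, giving $\langle h^\dagger(x)\rangle = \langle h^*(x)\rangle$, where $h^*$ is the ordinary reciprocal of the check polynomial $h$. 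Thus Step 1 becomes: $\mathcal{C}^\perp \subseteq \mathcal{C}$ if and only if $f(x) \mid h^*(x)$ in $\mathbb{F}_q[x]$.

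With the automorphism gone, I would finish with the standard reciprocal argument. Both $f$ and $h$ have nonzero constant terms because the constant term of $x^\alpha - 1$ is $-1$, so $*$ is multiplicative on the factorization and $-(x^\alpha - 1) = (x^\alpha - 1)^* = h^*(x) f^*(x)$, whence $h^*(x) = -\,(x^\alpha - 1)/f^*(x)$. Therefore $f(x) \mid h^*(x)$ if and only if $f(x) f^*(x) \mid x^\alpha - 1$, i.e. $x^\alpha - 1 \equiv 0 \pmod{f(x) f^*(x)}$. Reading this together with Step 2 gives both implications at once: $f f^* \mid x^\alpha - 1$ yields $f \mid h^*$, hence $h^* \in \mathcal{C}$ and $\mathcal{C}^\perp \subseteq \mathcal{C}$; conversely $\mathcal{C}^\perp \subseteq \mathcal{C}$ forces $f \mid h^*$ and therefore $f f^* \mid x^\alpha - 1$.

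The delicate point, and the step I expect to defend most carefully, is the reduction in Step 2: justifying that under $\gcd(\alpha, |\langle\Theta\rangle|) = 1$ the genuinely noncommutative skew reciprocal $h^\dagger$ may be replaced by the commutative reciprocal $h^*$ without altering the generated code. Concretely this means controlling the twisted product rule for the skew reciprocal, $(h f)^\dagger = \Theta^{\deg h}(f^\dagger)\, h^\dagger$, and showing that the coprimality hypothesis is exactly what makes the powers of $\Theta$ occurring there immaterial for the resulting code; once that is secured, the remainder is the classical cyclic-code computation.
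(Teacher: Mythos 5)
The paper offers no proof of this lemma --- it is imported verbatim from \cite{Islam18b} --- so there is no in-paper argument to compare against. Judged on its own terms, your overall strategy is the right and standard one: use $\gcd(\alpha,|\langle\Theta\rangle|)=1$ to conclude that $\mathcal{C}$ is an ordinary cyclic code, then run the classical reciprocal computation. Your final commutative step is correct: from $x^\alpha-1=\tilde{h}(x)f(x)$ in $\mathbb{F}_q[x]$ one gets $-(x^\alpha-1)=f^*(x)\tilde{h}^*(x)$, so $f\mid \tilde{h}^*$ if and only if $ff^*\mid x^\alpha-1$.

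Two points need repair. First, your Step 1 rests on $\mathcal{C}^\perp=\langle h^\dagger(x)\rangle$, but that description of the skew dual is only available when $|\langle\Theta\rangle|$ divides the length (compare the hypotheses of Theorem \ref{thdual} and of Lemma \ref{lemm dualcyc}, which treat exactly that complementary case); under the present hypothesis $\gcd(\alpha,|\langle\Theta\rangle|)=1$ it is not justified unless $\Theta=\mathrm{id}$. Consequently the ``delicate point'' you flag at the end --- taming $(hf)^\dagger=\Theta^{\deg h}(f^\dagger)\,h^\dagger$ --- is a detour you never actually complete, and it is also unnecessary: once coprimality makes $\mathcal{C}$ cyclic, its dual is the cyclic code generated by the reciprocal of the \emph{commutative} check polynomial $\tilde{h}=(x^\alpha-1)/f$ computed in $\mathbb{F}_q[x]$, and $h^\dagger$ never needs to enter. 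Your own remark that the Euclidean dual is the same set in either setting already licenses this; commit to it instead of promising to control the twisted product rule. Second, the one step that genuinely requires an argument is your assertion that right divisors of $x^\alpha-1$ in $\mathbb{F}_q[x;\Theta]$ are divisors in $\mathbb{F}_q[x]$, i.e.\ that the skew generator $f$ is also the commutative generator. This follows from a minimal-degree argument: $f\in\mathcal{C}$, $\dim\mathcal{C}=\alpha-\deg f$, and the monic polynomial of least degree in a cyclic code is its generator and divides $x^\alpha-1$ in $\mathbb{F}_q[x]$; as written you state the claim without proof. With these two repairs the argument is complete.
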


\begin{lemma}\label{lemm dualcyc}\cite{bag,Li20}
Let $|\langle \Theta\rangle|$ divide  $\alpha$ and $\mathcal{C}=\langle g(x)\rangle$ be a skew cyclic code of length $\alpha$ over $\mathbb{F}_q$ where $x^\alpha-1=h(x)g(x)$. Then $\mathcal{C}^{\perp}\subseteq \mathcal{C}$ if and only if $h^{\dagger}(x)h(x)$ is divisible by $x^\alpha-1$ from the right side.
\end{lemma}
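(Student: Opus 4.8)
The plan is to convert the module inclusion $\mathcal{C}^{\perp}\subseteq\mathcal{C}$ into a divisibility statement in the skew polynomial ring $\mathbb{F}_q[x;\Theta]$, exploiting two structural facts: this ring is a noncommutative domain admitting right Euclidean division, and $x^{\alpha}-1$ is \emph{central} precisely because $|\langle\Theta\rangle|$ divides $\alpha$ (so that $\mathbb{F}_q[x;\Theta]/\langle x^{\alpha}-1\rangle$ is an honest ring). Recall from the discussion preceding the lemma that $\mathcal{C}=\langle g(x)\rangle$ with $x^{\alpha}-1=h(x)g(x)$, and that its dual is the skew cyclic code $\mathcal{C}^{\perp}=\langle h^{\dagger}(x)\rangle$. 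Since $\mathcal{C}^{\perp}$ is generated as a left $\mathbb{F}_q[x;\Theta]$-module by the \emph{single} element $h^{\dagger}(x)$ and $\mathcal{C}$ is itself a left module, the first step is the reduction $\mathcal{C}^{\perp}\subseteq\mathcal{C}\iff h^{\dagger}(x)\in\mathcal{C}$.

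Next I would identify membership in $\mathcal{C}$ with right divisibility by $g(x)$. The code $\langle g(x)\rangle$ is the image in the quotient of the left ideal $\mathbb{F}_q[x;\Theta]g(x)$, and since $x^{\alpha}-1=h(x)g(x)$ already lies in that left ideal, a polynomial of degree less than $\alpha$ represents an element of $\langle g(x)\rangle$ if and only if $g(x)$ is a right divisor of it. Because the constant term of $x^{\alpha}-1$ is nonzero, so are the constant terms of $h(x)$ and $g(x)$; hence $\deg h^{\dagger}=\deg h=\alpha-\deg g<\alpha$, and no reduction modulo $x^{\alpha}-1$ intervenes. Thus $\mathcal{C}^{\perp}\subseteq\mathcal{C}$ if and only if $g(x)$ right-divides $h^{\dagger}(x)$, say $h^{\dagger}(x)=s(x)g(x)$.

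The crux is then to match right divisibility by $g(x)$ with right divisibility of $h^{\dagger}(x)h(x)$ by $x^{\alpha}-1$, and the key observation is that $g$ and $h$ commute. Indeed, centrality of $z:=hg=x^{\alpha}-1$ gives $g(hg)=(hg)g$; rewriting the left-hand side as $(gh)g$ and right-cancelling $g$ (legitimate since the ring is a domain) yields $gh=hg=x^{\alpha}-1$. Granting this, if $h^{\dagger}=sg$ then $h^{\dagger}h=s(gh)=s(x^{\alpha}-1)$, so $x^{\alpha}-1$ right-divides $h^{\dagger}h$; conversely, if $h^{\dagger}h=s(x^{\alpha}-1)=(sg)h$, right-cancelling $h$ (again using the domain property) gives $h^{\dagger}=sg$. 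Combining the three steps proves the stated equivalence.

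I expect the main obstacle to be exactly this commutation step: away from the hypothesis $|\langle\Theta\rangle|\mid\alpha$ the element $x^{\alpha}-1$ is no longer central, $gh\ne hg$ in general, and the product $h^{\dagger}h$ can no longer be rewritten through $x^{\alpha}-1$, so the equivalence breaks down. The remaining points---justifying that $\langle g\rangle$ consists exactly of the right multiples of $g$ of degree below $\alpha$, and tracking that $h^{\dagger}$ already has degree below $\alpha$---are routine once the domain and right-division structure of $\mathbb{F}_q[x;\Theta]$ are invoked.
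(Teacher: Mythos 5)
Your proof is correct, but there is nothing in the paper to compare it against: the lemma is quoted from \cite{bag,Li20} and used as a black box, with no proof supplied in this paper. Your chain of reductions --- $\mathcal{C}^{\perp}\subseteq\mathcal{C}$ iff $h^{\dagger}(x)\in\langle g(x)\rangle$ iff $g(x)$ right-divides $h^{\dagger}(x)$ iff $x^{\alpha}-1$ right-divides $h^{\dagger}(x)h(x)$, hinging on the commutation $g(x)h(x)=h(x)g(x)=x^{\alpha}-1$ obtained by right-cancellation from the centrality of $x^{\alpha}-1$ --- is the standard argument given in the cited references, and you correctly flag the two ingredients that make it work: the hypothesis $|\langle\Theta\rangle|$ divides $\alpha$ (centrality) and the fact that $\mathbb{F}_q[x;\Theta]$ is a domain with right division (cancellation and the degree bookkeeping for $h^{\dagger}$).
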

Now, we derive the necessary and sufficient conditions for  $\mathbb{F}_q\mathcal{R}$-skew cyclic codes to contain their duals.
\begin{theorem} \label{dualcon}
Let $|\langle\Theta\rangle|$ divide $\alpha$, $|\langle\theta\rangle|$ divide $\beta$ and $\mathcal{C}=\mathcal{C}'\otimes\mathscr{C}$ be an $\mathbb{F}_q\mathcal{R}$-skew cyclic code of length $(\alpha,\beta)$ where $\mathcal{C}'=\langle f(x)\rangle$ and $\mathscr{C}=\xi_1\mathcal{C}_1\oplus\xi_2\mathcal{C}_2=\langle g(x)\rangle$  are  skew cyclic codes over $\mathbb{F}_q$ and $\mathcal{R}$ respectively, with $g(x)=
 \xi_1g_1(x)+\xi_2g_2(x)$ and $x^{\alpha}-1=h(x)f(x)$, $x^{\beta}-1=h_i(x)g_i(x)$.
Then $\mathcal{C}^{\perp}\subseteq \mathcal{C}$ if and only if $h^{\dagger}(x)h(x)$ and $h^{\dagger}_i(x)h_i(x)$ are divisible by $x^\alpha-1$ and $x^\beta-1$, respectively for $i=1,2$ from the right side.
\end{theorem}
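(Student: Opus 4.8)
The plan is to reduce the dual-containing property of the separable code $\mathcal{C}=\mathcal{C}'\otimes\mathscr{C}$ to the corresponding properties of its two component codes, and then to translate each of those into a divisibility condition via the skew-cyclic machinery already available. Since $\mathcal{C}$ is separable, Theorem \ref{sep} gives immediately that $\mathcal{C}^{\perp}\subseteq\mathcal{C}$ holds if and only if both $\mathcal{C}'^{\perp}\subseteq\mathcal{C}'$ and $\mathscr{C}^{\perp}\subseteq\mathscr{C}$ hold. Thus it suffices to characterize each of these two containments separately and then reassemble them.

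For the $\mathbb{F}_q$-component, $\mathcal{C}'=\langle f(x)\rangle$ is a skew cyclic code of length $\alpha$ over $\mathbb{F}_q$ with $x^{\alpha}-1=h(x)f(x)$, and by hypothesis $|\langle\Theta\rangle|$ divides $\alpha$. Applying Lemma \ref{lemm dualcyc} directly, $\mathcal{C}'^{\perp}\subseteq\mathcal{C}'$ holds if and only if $h^{\dagger}(x)h(x)$ is right-divisible by $x^{\alpha}-1$, which disposes of the first component.

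For the $\mathcal{R}$-component, the key step is to decompose the dual of $\mathscr{C}$ along the idempotents $\xi_1,\xi_2$. Using $\xi_i^{2}=\xi_i$, $\xi_1\xi_2=0$ and $\xi_1+\xi_2=1$, for any $y=\xi_1a+\xi_2b$ and $y'=\xi_1a'+\xi_2b'$ in $\mathcal{R}^{\beta}$ the Euclidean inner product factors as $y\cdot y'=\xi_1(a\cdot a')+\xi_2(b\cdot b')$, where the right-hand inner products are taken over $\mathbb{F}_q$. By orthogonality of $\xi_1,\xi_2$, this vanishes in $\mathcal{R}$ if and only if both $a\cdot a'=0$ and $b\cdot b'=0$ in $\mathbb{F}_q$; consequently $\mathscr{C}^{\perp}=\xi_1\mathcal{C}_1^{\perp}\oplus\xi_2\mathcal{C}_2^{\perp}$, so $\mathscr{C}^{\perp}\subseteq\mathscr{C}$ if and only if $\mathcal{C}_1^{\perp}\subseteq\mathcal{C}_1$ and $\mathcal{C}_2^{\perp}\subseteq\mathcal{C}_2$. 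Since $|\langle\theta\rangle|$ divides $\beta$ and $\theta$ restricts to $\Theta$ on $\mathbb{F}_q$ (so $|\langle\Theta\rangle|=|\langle\theta\rangle|$ divides $\beta$ as well), each $\mathcal{C}_i=\langle g_i(x)\rangle$ is a skew cyclic code of length $\beta$ over $\mathbb{F}_q$ with $x^{\beta}-1=h_i(x)g_i(x)$, and Lemma \ref{lemm dualcyc} again yields $\mathcal{C}_i^{\perp}\subseteq\mathcal{C}_i$ if and only if $h_i^{\dagger}(x)h_i(x)$ is right-divisible by $x^{\beta}-1$, for $i=1,2$.

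Combining the three characterizations gives exactly the stated equivalence. The only genuinely new ingredient is the idempotent decomposition of $\mathscr{C}^{\perp}$, which I expect to be the main point requiring care: one must verify that the componentwise inner product indeed splits across $\xi_1$ and $\xi_2$ and that the resulting inclusion $\xi_1\mathcal{C}_1^{\perp}\oplus\xi_2\mathcal{C}_2^{\perp}\subseteq\mathscr{C}^{\perp}$ is actually an equality, which follows from the cardinality count $|\mathscr{C}^{\perp}|=|\mathcal{C}_1^{\perp}|\,|\mathcal{C}_2^{\perp}|$. The remainder is a routine assembly of Theorem \ref{sep} and Lemma \ref{lemm dualcyc}.
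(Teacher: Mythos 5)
Your proof is correct and follows essentially the same route as the paper: reduce via Theorem \ref{sep} to the two component codes, split the $\mathcal{R}$-component along the idempotents $\xi_1,\xi_2$, and apply Lemma \ref{lemm dualcyc} to each piece. The only difference is that you spell out the decomposition $\mathscr{C}^{\perp}=\xi_1\mathcal{C}_1^{\perp}\oplus\xi_2\mathcal{C}_2^{\perp}$ together with the cardinality argument, a step the paper merely asserts.
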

\begin{proof}
Let $\mathcal{C}^{\perp}\subseteq \mathcal{C}$. Then by Theorem \ref{sep}, $\mathcal{C}'^{\perp}\subseteq\mathcal{C}'$ and $\mathscr{C}^{\perp}\subseteq\mathscr{C}$. Also, $\mathcal{C}_1^{\perp}\subseteq\mathcal{C}_1$ and $\mathcal{C}_2^{\perp}\subseteq\mathcal{C}_2$ where $\mathcal{C}_1$ and $\mathcal{C}_2$ are skew cyclic codes of length $\beta$ over $\mathbb{F}_q$. Then from Lemma \ref{lemm dualcyc}, we have  $h^{\dagger}(x)h(x)$ and $h^{\dagger}_i(x)h_i(x)$ are divisible by $x^\alpha-1$ and $x^\beta-1$, respectively from the right side for $i=1,2$.\\ Conversely, suppose $h^{\dagger}(x)h(x)$ and $h^{\dagger}_i(x)h_i(x)$ are divisible by $x^\alpha-1$ and $x^\beta-1$, respectively from the right side for $i=1,2$. Then by Lemma \ref{lemm dualcyc}, we have $\mathcal{C}'^{\perp}\subseteq\mathcal{C}'$, $\mathcal{C}_1^{\perp}\subseteq\mathcal{C}_1$ and $\mathcal{C}_2^{\perp}\subseteq\mathcal{C}_2$. This implies $\mathscr{C}^{\perp}\subseteq\mathscr{C}$. Therefore, by Theorem \ref{sep},  $\mathcal{C}^{\perp}\subseteq \mathcal{C}$. \qed
\end{proof}
Finally, we present our main result for the construction of quantum codes from $\mathbb{F}_q\mathcal{R}$-skew cyclic codes with the help of Theorem \ref{dualcon}.
\begin{theorem}\label{thquantum}
Let $|\langle\Theta\rangle|$ divide $\alpha$, $|\langle\theta\rangle|$ divide $\beta$  and $\mathcal{C}=\mathcal{C}'\otimes\mathscr{C}$ be an $\mathbb{F}_q\mathcal{R}$-skew cyclic code of length $(\alpha,\beta)$ with $x^{\alpha}-1=h(x)f(x)$, $x^{\beta}-1=h_i(x)g_i(x)$. If $h^{\dagger}(x)h(x)$ and $h^{\dagger}_i(x)h_i(x)$ are divisible by $x^\alpha-1$ and $x^\beta-1$, respectively from the right side for $i=1,2$, then there exists a quantum code $[[\alpha+2\beta,2k-(\alpha+2\beta),d_H]]_q$ where $d_H$ denotes the Hamming distance and $k$ denotes the dimension of the code $\varphi(\mathcal{C})$.
\end{theorem}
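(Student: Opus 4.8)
The plan is to obtain the quantum code by composing three facts already in hand: the dual-containment criterion of Theorem \ref{dualcon}, the orthogonality-preservation of the Gray map $\varphi$, and the CSS construction of Lemma \ref{lemma css}. First I would note that the hypotheses placed on $h^{\dagger}(x)h(x)$ and on the $h_i^{\dagger}(x)h_i(x)$ are exactly the divisibility conditions appearing in Theorem \ref{dualcon}. Applying that theorem verbatim to $\mathcal{C}=\mathcal{C}'\otimes\mathscr{C}$ therefore yields the dual-containment $\mathcal{C}^{\perp}\subseteq\mathcal{C}$ at the level of the mixed-alphabet code.

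Next I would push this inclusion through the Gray map. Using the theorem that $\varphi(\mathcal{C})^{\perp}=\varphi(\mathcal{C}^{\perp})$ together with the fact that $\varphi$ is an $\mathbb{F}_q$-linear bijection, applying $\varphi$ to $\mathcal{C}^{\perp}\subseteq\mathcal{C}$ gives
$$\varphi(\mathcal{C})^{\perp}=\varphi(\mathcal{C}^{\perp})\subseteq\varphi(\mathcal{C}).$$
Thus $\varphi(\mathcal{C})$ is a Euclidean dual-containing linear code over $\mathbb{F}_q$. Its parameters are supplied by the image theorem: since $\mathcal{C}$ is $\mathbb{F}_q\mathcal{R}$-linear of length $(\alpha,\beta)$, the code $\varphi(\mathcal{C})$ is an $[\alpha+2\beta,\,k,\,d_H]_q$ code, where $k$ is its $\mathbb{F}_q$-dimension and $d_H$ its minimum Hamming distance.

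Finally I would invoke the second half of Lemma \ref{lemma css} with $\mathcal{C}_1=\varphi(\mathcal{C})$, $n=\alpha+2\beta$, $k_1=k$ and $d_1=d_H$: since $\varphi(\mathcal{C})^{\perp}\subseteq\varphi(\mathcal{C})$, the CSS construction produces a quantum code $[[\alpha+2\beta,\,2k-(\alpha+2\beta),\,d_H]]_q$, which is the asserted statement. The only step requiring genuine care is the transport of dual-containment across $\varphi$: one must use the precise identity $\varphi(\mathcal{C})^{\perp}=\varphi(\mathcal{C}^{\perp})$ (which itself rests on the choice $MM^{T}=\gamma I_2$ for the defining matrix of the Gray map) and combine it with bijectivity so that the inclusion comes out in the direction $\varphi(\mathcal{C})^{\perp}\subseteq\varphi(\mathcal{C})$ rather than the reverse. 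Everything else is a direct citation of the preceding results.
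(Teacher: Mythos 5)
Your proposal is correct and follows the same overall skeleton as the paper's proof: the divisibility hypotheses feed into Theorem \ref{dualcon} to give $\mathcal{C}^{\perp}\subseteq\mathcal{C}$, this containment is transported to the Gray image, and the second part of Lemma \ref{lemma css} applied to the dual-containing $[\alpha+2\beta,k,d_H]_q$ code $\varphi(\mathcal{C})$ yields $[[\alpha+2\beta,2k-(\alpha+2\beta),d_H]]_q$. The one place you genuinely diverge is the middle step: the paper never invokes the identity $\varphi(\mathcal{C})^{\perp}=\varphi(\mathcal{C}^{\perp})$ here, but instead uses the decomposition $\varphi(\mathcal{C})=\mathcal{C}'\otimes\mathcal{C}_1\otimes\mathcal{C}_2$, asserts $\varphi(\mathcal{C})^{\perp}=\mathcal{C}'^{\perp}\otimes\mathcal{C}_1^{\perp}\otimes\mathcal{C}_2^{\perp}$, and deduces dual containment of the image from the componentwise containments $\mathcal{C}'^{\perp}\subseteq\mathcal{C}'$ and $\mathcal{C}_i^{\perp}\subseteq\mathcal{C}_i$ already supplied by Lemma \ref{lemm dualcyc}. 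Your route is arguably tidier, resting on a single previously proved orthogonality-preservation theorem for $\varphi$ rather than an ``easy to check'' description of the dual of the image; the paper's route has the small advantage of exhibiting the Euclidean dual containment of $\varphi(\mathcal{C})$ directly over $\mathbb{F}_q$, component by component, without needing to reconcile the $\mathcal{R}$-valued inner product on $\mathbb{F}_q^{\alpha}\times\mathcal{R}^{\beta}$ with the Euclidean one on $\mathbb{F}_q^{\alpha+2\beta}$ (which is exactly the delicate point you flag at the end). Either way the application of the CSS construction is identical, so the two arguments differ only in how the dual containment is pushed through the Gray map.
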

\begin{proof}
Let $h^{\dagger}(x)h(x)$ and $h^{\dagger}_i(x)h_i(x)$ be divisible by $x^\alpha-1$ and $x^\beta-1$, respectively from the right side for $i=1,2$. Then by Theorem \ref{dualcon}, we have $\mathcal{C}^{\perp}\subseteq \mathcal{C}$. As $\varphi(\mathcal{C})=\mathcal{C}'\otimes \mathcal{C}_1\otimes\mathcal{C}_2$, it is easy to check that $\varphi(\mathcal{C})^\perp=\mathcal{C}'^\perp\otimes \mathcal{C}_1^\perp\otimes\mathcal{C}_2^\perp$. Also, $\mathcal{C}'^{\perp}\subseteq\mathcal{C}'$, $\mathcal{C}_1^{\perp}\subseteq\mathcal{C
}_1$ and $\mathcal{C}_2^{\perp}\subseteq\mathcal{C}_2$. Therefore, $\varphi(\mathcal{C})$ is a dual containing linear code with parameters $[\alpha+2\beta,k,d_H]$. Hence, by Lemma \ref{lemma css}, there exists a quantum code with parameters $[[\alpha+2\beta,2k-(\alpha+2\beta),d_H]]_q$. \qed
\end{proof}
Now, with the help of Corollary \ref{cor1}, Theorem \ref{thquantum} can also be stated in the following form:
\begin{corollary}\label{coro-quantum}
Let $|\langle\theta\rangle|$ divide $\beta$, $\gcd(\alpha, |\langle \Theta\rangle|)=1$  and $\mathcal{C}=\mathcal{C}'\otimes\mathscr{C}$ be an $\mathbb{F}_q\mathcal{R}$-skew cyclic code of length $(\alpha,\beta)$ where $\mathcal{C}'$ is a cyclic code over $\mathbb{F}_q$ and $\mathscr{C}$  is a  skew cyclic code over $\mathcal{R}$ with $x^{\alpha}-1=h(x)f(x)$, $x^{\beta}-1=h_i(x)g_i(x)$. If $x^\alpha-1$ is divisible by  $f(x)f^*(x)$ and and $h^{\dagger}_i(x)h_i(x)$ is divisible by $x^\beta-1$ from the right side for $i=1,2$, then there exists a quantum code $[[\alpha+2\beta,2k-(\alpha+2\beta),d_H]]_q$.
\end{corollary}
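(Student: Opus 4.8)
The plan is to run the same argument as in Theorem \ref{thquantum}, replacing the divisibility hypothesis on the length-$\alpha$ block by the coprimality hypothesis $\gcd(\alpha,|\langle\Theta\rangle|)=1$ and correspondingly swapping the dual-containing criterion for the $\mathbb{F}_q$-part. First I would observe that under $\gcd(\alpha,|\langle\Theta\rangle|)=1$, Corollary \ref{cor1} guarantees that the first block $\mathcal{C}'=\langle f(x)\rangle$ is an honest cyclic code of length $\alpha$ over $\mathbb{F}_q$. Consequently the relevant dual-containing test for $\mathcal{C}'$ is the cyclic one, namely Lemma \ref{thm constadual}, rather than the skew-cyclic test (Lemma \ref{lemm dualcyc}) that was used on the $\alpha$-block in Theorem \ref{thquantum}.

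The argument then proceeds through three reductions to $\mathcal{C}^{\perp}\subseteq\mathcal{C}$. Using the hypothesis $f(x)f^*(x)\mid x^{\alpha}-1$, Lemma \ref{thm constadual} yields $\mathcal{C}'^{\perp}\subseteq\mathcal{C}'$. Using the hypotheses that $x^{\beta}-1$ divides $h_i^{\dagger}(x)h_i(x)$ from the right for $i=1,2$, Lemma \ref{lemm dualcyc} (which applies because $|\langle\theta\rangle|$ divides $\beta$) gives $\mathcal{C}_i^{\perp}\subseteq\mathcal{C}_i$, so that $\mathscr{C}=\xi_1\mathcal{C}_1\oplus\xi_2\mathcal{C}_2$ satisfies $\mathscr{C}^{\perp}\subseteq\mathscr{C}$. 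Feeding $\mathcal{C}'^{\perp}\subseteq\mathcal{C}'$ and $\mathscr{C}^{\perp}\subseteq\mathscr{C}$ into the separable-code criterion of Theorem \ref{sep} produces $\mathcal{C}^{\perp}\subseteq\mathcal{C}$; equivalently, one may package these three inclusions through Theorem \ref{dualcon} with the cyclic condition substituted for the skew one on the $\alpha$-block.

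Finally I would pass to the Gray image. By the decomposition $\varphi(\mathcal{C})=\mathcal{C}'\otimes\mathcal{C}_1\otimes\mathcal{C}_2$ together with the orthogonality-preserving identity $\varphi(\mathcal{C}^{\perp})=\varphi(\mathcal{C})^{\perp}$ established earlier, the code $\varphi(\mathcal{C})$ is a dual-containing $[\alpha+2\beta,k,d_H]_q$ linear code, where $k$ is its dimension and $d_H$ its Hamming distance. Applying the self-orthogonal-dual branch of the CSS construction (Lemma \ref{lemma css}) to the dual-containing code $\varphi(\mathcal{C})$ then outputs the quantum code $[[\alpha+2\beta,2k-(\alpha+2\beta),d_H]]_q$, as claimed.

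I do not expect a genuine obstacle here: the statement is a reformulation of Theorem \ref{thquantum} in which the only substantive change is the dual-containing criterion for the length-$\alpha$ block. The single point that needs care is to confirm that under $\gcd(\alpha,|\langle\Theta\rangle|)=1$ the first block really collapses to a cyclic code, so that Lemma \ref{thm constadual} (and not Lemma \ref{lemm dualcyc}) governs its dual containment; this is exactly the content of Corollary \ref{cor1}, after which every remaining step is a direct citation of an earlier result.
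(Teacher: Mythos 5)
Your proposal is correct and follows exactly the route the paper intends: the paper states this corollary as a restatement of Theorem \ref{thquantum} via Corollary \ref{cor1}, with Lemma \ref{thm constadual} replacing Lemma \ref{lemm dualcyc} on the length-$\alpha$ block, and the rest of the argument (Theorem \ref{sep}, the Gray image decomposition, and the CSS construction of Lemma \ref{lemma css}) carried over unchanged. Your write-up simply makes explicit the steps the paper leaves implicit.
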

\begin{remark}
Notice that the length of the quantum code obtained from the codes over the ring $\mathcal{R}$
must be an integral multiple of $2$, whereas the code length in Theorem \ref{thquantum} has no such
restriction, i.e., we can find code of any length $\alpha+2\beta$ with some suitable choices of
 $\alpha$ and $\beta$. For example, to construct a code of length $60$, there are finitely many choices for $\alpha$ and $\beta$ such that $\alpha+2\beta=60$. However, in the case of $\mathcal{R}$, the only choice is $\beta=30$. This
is one of the advantages of studying the $\mathbb{F}_q\mathcal{R}$-skew cyclic codes in quantum
code constructions.
\end{remark}
Now, with the help of our derived results, we construct several new and better quantum codes than the existing ones appearing in \cite{bag,Edel,Verma21}. All our computations in the examples are carried out using the Magma
computation system \cite{Bosma}.
\begin{example}
Let $q=9$, $\alpha=49$, $\beta=36$ and $\mathcal{R}=\mathbb{F}_9[u]/\langle u^2-u\rangle$ where $\mathbb{F}_9=\mathbb{F}_3(w)$ and $w^2+2w+2=0$.  In $\mathbb{F}_9$, the Frobenius automorphism $\Theta:\mathbb{F}_9\longrightarrow \mathbb{F}_9$ is defined by $\Theta(a)=a^3$ for all $a\in \mathbb{F}_9$. Therefore, $\mathbb{F}_9[x;\Theta]$ is a skew polynomial ring. In $\mathbb{F}_9[x;\Theta]$, we have
	\begin{align*}
	    x^{36}-1&= w^2x^{34} + 2x^{33} + w^7x^{32} + 2x^{31} + x^{30} + w^6x^{28} + x^{27} +
        w^3x^{26} + x^{25} + 2x^{24} + w^2x^{22}\\& + 2x^{21} + w^7x^{20} + 2x^{19} + x^{18}
        + w^6x^{16} + x^{15} + w^3x^{14} + x^{13} + 2x^{12} + w^2x^{10} + 2x^9\\& +
        w^7x^8 + 2x^7 + x^6 + w^6x^4 + x^3 + w^3x^2 + x + 2)
        (w^6x^2 + x + 1),\\
	    &=(w^6x^{34} + 2x^{33} + w^5x^{32} + 2x^{31} + x^{30} + w^2x^{28} + x^{27} + wx^{26}
        + x^{25} + 2x^{24} + w^6x^{22} \\&+ 2x^{21} + w^5x^{20} + 2x^{19} + x^{18} +
        w^2x^{16} + x^{15} + wx^{14} + x^{13} + 2x^{12} + w^6x^{10} + 2x^9\\& + w^5x^8 +
        2x^7 + x^6 + w^2x^4 + x^3 + wx^2 + x + 2)(
        w^2x^2 + x + 1).
\end{align*} In $\mathbb{F}_9[x]$, we have	
\begin{align*}
x^{49}-1&=(x + 2)(x^3 + wx^2 + w^7x + 2)(x^3 + w^3x^2 + w^5x + 2)(x^{21} + wx^{14} + w^7x^7 + 2)\\&(x^{21} + w^3x^{14} + w^5x^7 + 2).
\end{align*}
	Now, let $f(x)=x^3 + w^3x^2 + w^5x + 2,$ $g_1(x
	)=w^6x^2 + x + 1$ and $g_2(x)=w^2x^2+x+1$. Then $\mathcal{C}$ is an $\mathbb{F}_q\mathcal{R}$-skew cyclic code of length $(49,36)$.
Let 	
	\begin{equation*}\label{eq3}
	M=
	\begin{pmatrix}
	1&1\\
	1&-1
	\end{pmatrix}\in GL_2(\mathbb{F}_9)
	\end{equation*}
	such that $MM^T=2I_2$. Then $\varphi(\mathcal{C})$ is a $[121,114,4]_9$ linear code over $\mathbb{F}_9$. Now, we have
\begin{align*}
h_1(x)&=w^2x^{34} + 2x^{33} + w^7x^{32} + 2x^{31} + x^{30} + w^6x^{28} + x^{27} +
        w^3x^{26} + x^{25} + 2x^{24} + w^2x^{22}\\& + 2x^{21} + w^7x^{20} + 2x^{19} + x^{18}
        + w^6x^{16} + x^{15} + w^3x^{14} + x^{13} + 2x^{12} + w^2x^{10} + 2x^9\\& +
        w^7x^8 + 2x^7 + x^6 + w^6x^4 + x^3 + w^3x^2 + x + 2, \\
h_1^{\dagger}(x)&=2x^{34} + x^{33} + w^3x^{32} + x^{31} + w^6x^{30} + x^{28} + 2x^{27} + w^7x^{26} +
        2x^{25} + w^2x^{24} + 2x^{22}\\& + x^{21} + w^3x^{20} + x^{19} + w^6x^{18} + x^{16} +
        2x^{15} + w^7x^{14} + 2x^{13} + w^2x^{12} + 2x^{10} + x^9 \\&+ w^3x^8 + x^7 +
        w^6x^6 + x^4 + 2x^3 + w^7x^2 + 2x + w^2,\\
h_1^{\dagger}(x)h_1(x)&=(w^6x^{32} + w^5x^{31} + wx^{30} + w^6x^{29} + w^7x^{28} + w^2x^{27} + 2x^{26} + 2x^{25}
+ 2x^{24} \\&+ w^6x^{23} + w^7x^{22} + w^2x^{21} + wx^{20} + w^7x^{19} + w^6x^{18} +
w^2x^{14} + wx^{13} + w^5x^{12}\\& + w^2x^{11} + w^3x^{10} + w^6x^9 + x^8 + x^7 + x^6 +
w^2x^5 + w^3x^4 + w^6x^3 + w^5x^2 + w^3x \\&+ w^2)(x^{36}-1),
\end{align*}
and
\begin{align*}
 h_2(x)&=w^6x^{34} + 2x^{33} + w^5x^{32} + 2x^{31} + x^{30} + w^2x^{28} + x^{27} + wx^{26}
        + x^{25} + 2x^{24} + w^6x^{22} \\&+ 2x^{21} + w^5x^{20} + 2x^{19} + x^{18} +
        w^2x^{16} + x^{15} + wx^{14} + x^{13} + 2x^{12} + w^6x^{10} + 2x^9\\& + w^5x^8 +
        2x^7 + x^6 + w^2x^4 + x^3 + wx^2 + x + 2, \\ h_2^{\dagger}(x)&=2x^{34} + x^{33} + wx^{32} + x^{31} + w^2x^{30} + x^{28} + 2x^{27} + w^5x^{26} +
        2x^{25} + w^6x^{24} + 2x^{22} \\&+ x^{21} + wx^{20} + x^{19} + w^2x^{18} + x^{16} +
        2x^{15} + w^5x^{14} + 2x^{13} + w^6x^{12} + 2x^{10} + x^9 \\&+ wx^8 + x^7 +
        w^2x^6 + x^4 + 2x^3 + w^5x^2 + 2x + w^6,\\
 h_2^{\dagger}(x)h_2(x)&=(w^2x^{32} + w^7x^{31} + w^3x^{30} + w^2x^{29} + w^5x^{28} + w^6x^{27} + 2x^{26} +
2x^{25} + 2x^{24} \\&+ w^2x^{23} + w^5x^{22} + w^6x^{21} + w^3x^{20} + w^5x^{19} +
w^2x^{18} + w^6x^{14} + w^3x^{13} + w^7x^{12} \\&+ w^6x^{11} + wx^{10} + w^2x^9 + x^8 +
x^7 + x^6 + w^6x^5 + wx^4 + w^2x^3 + w^7x^2 + wx \\&+ w^6)(x^{36}-1).
\end{align*}
Therefore, $h_1^{\dagger}(x)h_1(x)$ and  $h_2^{\dagger}(x)h_2(x)$ are divisible by $x^{36}-1$ from the right sides. Also, $x^{49}-1$ is divisible by $f(x)f^*(x)$. Hence, by Theorem \ref{coro-quantum}, we have a quantum code with parameters $[[121,107,4]]_{9}$ which has same
length and minimum distance but larger code rate than existing code $[[121,106,4]]_9$ given by \cite{Edel}.
	\end{example}


\begin{table}[ht]
\renewcommand{\arraystretch}{1.8}
\begin{center}
\caption{New quantum codes from $\mathbb{F}_q\mathcal{R}$-skew cyclic codes}

\begin{tabular}{|c|c|c|c|c|c|c|c|c|c|c|c|c|c|c|}

\hline
$q$ & $(\alpha,\beta)$ &  $[f(x),g_1(x),g_{2}(x)]$ & $\varphi(\mathcal{C})$ & Obtained  & Existing  \\

  &   &    &   & Codes  & Codes \\

\hline

$9$ & $(49,40)$ & $(1ww^72, w^61, 1w^2)$   & $[129,124,3]_9$ & $[[129,119,3]]_9$ & $[[129, 118,3]]_9$ \cite{Edel} \\

$9$ & $(49,36)$ & $(1w^3w^52, w^611, w^211)$   & $[121,114,4]_9$ & $[[121,107,4]]_9$ & $[[121,106,4]]_9$ \cite{Edel} \\

$9$ & $(75,30)$ & $(1w^31, w^7w^30, w^61)$   & $[285,280,3]_9$ & $[[285,275,3]]_9$ & $[[286,275,3]]_9$ \cite{Edel} \\

$9$ & $(729,20)$ & $(12021, w1, w^5w^31)$   & $[769,762,3]_9$ & $[[769,755,3]]_9$ & $[[769,745,3]]_9$ \cite{Edel} \\

$25$ & $(21,30)$ & $(1ww^{17}4, w^{15}0w^3,w^{8}1)$   & $[81,75,3]_{25}$ & $[[81,69,3]]_{25}$ & $[[80,56,3]]_{25}$ \cite{bag} \\

$25$ & $(55,32)$ & $(131, w^{9}1,1w^{21}1)$   & $[119,114,3]_{25}$ & $[[119,109,3]]_{25}$ & $[[120,106,3]]_{25}$ \cite{Verma21} \\

$49$ & $(75,36)$ & $(1ww^{9}6, w^{21}51, w^{39}41)$   & $[147,140,3]_{49}$ & $[[147,133,3]]_{49}$ & $[[144,126,3]]_{49}$ \cite{Verma21} \\

\hline
\end{tabular}\label{tab2}
\end{center}
\end{table}
In Table $1$, we present some quantum codes from $\mathbb{F}_q\mathcal{R}$-skew cyclic code for $q=9,25,49$. To compute their Gray images for $q=9,25,49$, we use
\begin{equation*}\label{eq3}
	M=
	\begin{pmatrix}
	1&1\\
	1&-1
	\end{pmatrix}\in GL_2(\mathbb{F}_q)
	\end{equation*} satisfying $MM^T=2I_2$.
Let  $\mathcal{C}=\mathcal{C}'\otimes\mathscr{C}$ be an $\mathbb{F}_q\mathcal{R}$-skew cyclic code of length $(\alpha,\beta)$ where $|\langle\theta\rangle|$ divides $\beta$ and $\gcd(\alpha, |\langle \Theta\rangle|)=1$. In Table \ref{tab2}, we provide generator polynomials $f(x)$ for $\mathcal{C}'$ and $g_1(x),g_2(x)$ for $\mathscr{C}=\xi_1\mathcal{C}_1\oplus \xi_2\mathcal{C}_2$ such that $x^{\alpha}-1=h(x)f(x)$, $x^{\beta}-1=h_i(x)g_i(x)$ for $i=1,2$. Also, we construct these codes under the conditions that $x^\alpha-1$ is divisible by $f(x)f^*(x)$ and $h^{\dagger}_i(x)h_i(x)$ is divisible by $x^\beta-1$ from the right side for $i=1,2$. Therefore, by Corollary \ref{coro-quantum}, we construct quantum codes (in the $5^{\text{th}}$ column), which have better parameters than the existing codes (in the $6^{\text{th}}$ column). To represent polynomials $f(x),g_i(x)$ for $i=1,2$, we write the vectors consisting of their coefficients
in decreasing order of power of $x$. For instance, we write the vector $1ww^72$ to represent the polynomial $x^3+wx+w^7x+2$.
\section{Conclusion}
In this paper, we have first discussed the structure of the ring $\mathbb{F}_q\mathcal{R}$ and studied the structural properties of $\mathbb{F}_q\mathcal{R}$-skew cyclic codes. As an application of our established results, we have constructed many quantum codes with better parameters using the CSS construction. To show the novelty of our obtained codes, we have compared them with the
best-known codes available in the literature. We believe that this work still has scope for further study by taking the direct product of cyclic, constacyclic, skew cyclic and skew constacyclic codes over finite rings.
\section*{Acknowledgements}
The first and second authors thank the Department of Science and Technology (DST), Govt. of India, for financial support under CRG/2020/005927, vide Diary No. SERB/F/67\\80/ 2020-2021 dated 31 December, 2020, and Ref No. DST/INSPIRE/03/2016/ 001445, respectively. H. Islam thanks the University of St. Gallen, Switzerland, for financial support under International Postdoctoral Fellowship (IPF). Also, the authors would like to thank the anonymous referee(s) and the Editor for their valuable comments to improve the presentation of the paper.	

\section*{Data Availability}
The authors declare that [the/all other] data supporting the findings of this study are available within the article. Any clarification may be requested from corresponding author provided it is essential. \\
\textbf{Competing interests}: The authors declare that there is no conflict of interest regarding the publication of this manuscript.\\

	\end{document}